\documentclass{article}
\usepackage[utf8]{inputenc}
\usepackage[margin=1in]{geometry}  
\usepackage{booktabs}
\usepackage{graphicx}
\usepackage{makecell}
\usepackage{blkarray}
\usepackage{tikz}
\usepackage{natbib}
\usepackage{caption}   
\usepackage{float}     
\usepackage{enumitem}
\usepackage{amsmath, amsthm, amssymb}
\allowdisplaybreaks
\usepackage{setspace}
\usepackage{mathtools}
\usepackage[colorlinks=true,linkcolor=blue,citecolor=blue,urlcolor=blue]{hyperref}
\usepackage{cleveref}

\let\emptyset\varnothing

\DeclarePairedDelimiter\norm{\lVert}{\rVert}

\title{Neyman Jackknife: Design-Based Variance Estimation \\ for Causal Inference under Interference}
\author{Bryan Park \and Stefan Wager}
\date{Stanford University \\ \today}

\begin{document}

\maketitle

\newtheorem{theorem}{Theorem}
\newtheorem{problem}{Problem}
\newtheorem{lemma}[theorem]{Lemma}
\newtheorem{definition}{Definition}
\newtheorem{example}{Example}
\newtheorem{observation}{Observation}
\newtheorem*{answer}{Answer}
\newtheorem{corollary}[theorem]{Corollary}
\newtheorem{proposition}[theorem]{Proposition}
\newtheorem{assumption}{Assumption}
\newtheorem*{remark}{Remark}
\newtheorem{condition}{Condition}
\newtheorem{step}{Step}

\newcommand{\Var}{\operatorname{Var}}
\newcommand{\UB}{\normalfont\text{UB}_{\text{oracle}}}
\newcommand{\Cov}{\operatorname{Cov}}
\newcommand{\C}{\mathcal{C}}
\newcommand{\E}{\mathbb{E}}
\newcommand{\N}{\mathbb{N}}
\newcommand{\W}{\mathbf{W}}
\newcommand{\G}{\mathcal{G}}
\newcommand{\F}{\mathcal{F}}
\renewcommand{\W}{\mathbf{W}}
\newcommand{\indep}{\perp \!\!\! \perp}
\newcommand{\FX}{\mathcal{F}^\mathbf{X}}
\renewcommand{\H}{\mathcal{H}}
\renewcommand{\L}{\mathcal{L}}
\renewcommand{\S}{\mathcal{S}}
\renewcommand{\N}{\mathcal{N}}
\newcommand{\Z}{\mathbf{Z}}
\newcommand{\T}{\mathsf{T}}
\renewcommand{\P}{\mathbb{P}}
\newcommand{\B}{\mathcal{B}}
\newcommand{\A}{\mathcal{A}}
\newcommand{\V}{\mathcal{V}}
\newcommand{\X}{\mathbf{X}}
\newcommand{\D}{\mathcal{D}}
\newcommand{\R}{\mathbb{R}}
\newcommand{\Y}{\mathbf{Y}}
\newcommand{\limd}[1][]{\xrightarrow[#1]{d}}
\newcommand{\limp}[1][]{\xrightarrow[#1]{p}}
\newcommand{\limas}[1][]{\xrightarrow[#1]{\mathrm{a.s.}}}
\makeatletter
\newcommand{\Mod}[1]{\ (\mathrm{mod}\ #1)}
\newcommand{\ed}{\,{\buildrel d \over =}\,}

\begin{abstract}
We propose a framework, the Neyman Jackknife, for conservative variance estimation in
finite-population causal inference under interference. Our approach provides
a general, flexible blueprint that enables conservative variance estimation
whenever we are able to recompute our target estimator with some treatment
assignments omitted. In classical settings, our approach recovers estimators
closely related to the Neyman estimator under SUTVA and the Newey--West HAC
variance estimator for time series. Numerical experiments suggest that our
general-purpose framework yields variance estimators that can match or even
surpass the performance of baselines that were purpose-built for specific
applications.

\end{abstract}

\section{Introduction}
Variance estimation is a key component of design-based causal inference. Even under the stable unit treatment value assumption (SUTVA), unbiased variance estimation is not possible in the design-based setting without additional strong assumptions. This is due to the fundamental problem of causal inference where multiple potential outcomes for the same unit cannot be observed at the same time. Thus, a general alternative is to derive conservative variance estimators. Classically, \citet{neyman1923applications} derived such an estimator under SUTVA by upper bounding unobservable terms. This analytic approach was further adopted in settings with interference, where conservative variance estimators have been developed per design, model of interference, and estimator. Recently, \cite{harshaw2026variance} introduced a unifying framework for analytic approaches assuming linear estimators and well-specified exposure mappings.

In this paper, we introduce the ``Neyman Jackknife," a framework for design-based, conservative variance estimation under interference via recomputing methods. Here, the sole randomness lies in the treatment assignments, or the design, and our variance estimator is obtained by recomputing the original estimator with some treatments left out. Our framework applies to general designs, models of interference, and estimators, and does not assume well-specified exposure mappings. Moreover, it is particularly effective when interference is approximately local.

Under SUTVA and the completely randomized design, our framework yields a Neyman-like variance estimator, namely a weighted sum of the within-arm sample variances \citep{neyman1923applications}. For circular $2M$-dependence and the Bernoulli design, our framework gives a scaled version of the Newey--West estimator \citep{NeweyWest1987}, reflecting the connection between block recompute methods and HAC variance estimation \citep{Kunsch1989, politis1997subsampling}. In both cases, we see that classical traditions such as Neyman or HAC variance estimation can be understood through not only analytic methods but also Jackknife techniques.

Given a treatment-assignment design for $m$ units, our framework involves two user-specified choices for conservative variance estimation. The first choice is an index-sampling rule $\mu$ that draws a random subset $S\subseteq [m]$ and induces a reversible Gibbs-type transition for the treatment vector $\W$. Applying the Poincar\'{e} inequality \citep[Lemma 13.7]{levin2017markov} to this transition, we obtain an upper bound for the variance involving an oracle conditional expectation term. Given $S$, the second choice is a computable proxy $f^{-S}$ for the oracle conditional expectation, where we allow any proxy that only involves treatments of units not in $S$. These two choices result in our estimator $\widehat V,$ where $\E[\widehat V]$ bounds the variance via the oracle Poincar\'{e} inequality along with the additional approximation error for the oracle conditional expectation. We show that there is a tradeoff in selecting the random update set $S$ and the computable proxy $f^{-S}.$ 
While our emphasis is on causal inference, our ideas apply more broadly to conservative variance estimation for functions of a general random vector $\W$.


\subsection{Related Work}
Design-based conservative variance estimation under SUTVA began with \cite{neyman1923applications}, who analyzed the difference-in-means estimator under the completely randomized design. Subsequent work sharpened Neyman's variance estimator by exploiting additional information regarding the unobserved finite-population pairing of potential outcomes \citep{robins1988confidence, AronowGreenLee2014,nutz2022directional, ImbensMenzel2021}. Moreover, both \cite{MukerjeeDasguptaRubin2018} and \cite{ChattopadhyayImbens2024} proposed frameworks that extend Neyman-style variance estimation to more general designs. In general, conservative variance estimation under SUTVA has been well studied, and we refer the reader to the references above for further background.

More recently, conservative variance estimation has received increasing attention in the setting of interference \citep{Hudgens,Manski}, where the treatment for one unit may affect the outcome of another unit. Here, many variance estimators were developed under the ``heteroskedasticity and autocorrelation consistent" (HAC) variance estimation tradition \citep{NeweyWest1987}. For instance, \cite{leung2022ani} proposed a network HAC variance estimator under approximate neighborhood interference. However, as noted in \citep[Chapter~12.2]{wager2026causal} and  \citep{gao2025network}, such estimators may be anti-conservative under interference and thus require careful handling. Recent work by \citet{gao2025network} and \citet{lin2025timeseries} addressed this issue by developing conservative HAC-style variance estimators for network and temporal interference, respectively. Beyond the HAC tradition, conservative variance estimators under interference have also been obtained by extending Neyman's idea of upper-bounding unobservable terms \citep{Lu,lu2025design}.

For general settings of interference, \citet{Aronow} introduced an exposure mapping approach for variance estimation, and \citet{harshaw2026variance} sharpened this approach through an optimization-based construction. However, these methods assume that the potential outcomes are fully characterized by the specified exposure mapping. As noted by \cite{Savje2}, exposure mappings serve as natural tools for defining treatment-effect estimands, but not necessarily as correct models for the potential outcomes structure. Under such misspecification, variance estimation remains a difficult problem. 

We contribute to this problem by developing a recomputing-based framework for conservative variance estimation under interference. Our approach builds on the Jackknife tradition, from the original Jackknife for independent data \citep{tukey1986collected3} to the block Jackknife for dependent data \citep{Kunsch1989, politis1997subsampling}. Similar to how the Efron--Stein inequality is the theoretical backbone for the Jackknife \citep{efron1981jackknife}, we use the Poincar\'{e} inequality \citep[Lemma 13.7]{levin2017markov} to obtain a conservative variance estimation method for functions of a random vector. Our method builds on the insight that sampling uncertainty should be separated from treatment-assignment uncertainty \citep{AbadieAtheyImbensWooldridge2020, ImbensMenzel2021}. In our setup, the only randomness lies in the treatment assignments. Thus, rather than resampling or imputing potential outcomes directly, we delete some treatments and recompute the treatment-effect estimator. In this sense, our procedure is a Jackknife for treatment assignments, while the underlying potential outcomes are held fixed.

\section{The General Framework}\label{general}

Here, we develop a general framework for conservative estimation of $\Var(f(\W))$ where $\W\in\mathcal{X}^m$ is a random vector and $\mathcal{X}$ is finite. We note that our results should generalize to infinite $\mathcal{X}$ by using the Poincar\'{e} inequality for Markov kernels. We consider finite $\mathcal{X}$ for ease of exposition. In Section \ref{Neyman}, we apply our framework to causal inference under interference.

\subsection{The Poincar\'{e} Inequality}\label{explain_poincare}

We first review the classical Poincar\'{e} inequality for reversible Markov chains \citep[Lemma 13.7]{levin2017markov}. Let $\W\sim\pi$ be any $\S$-valued random variable where $\mathcal{S}$ is finite. Note that $\W$ is not necessarily a random vector here. Given any measurable $f:\S\to\R$, we wish to upper bound $\Var(f(\W))$. 
Let $\Omega=\{w\in\S:\pi(w)>0\}$ denote the support of $\pi$ and let $P$ denote any $|\Omega|\times|\Omega|$ transition matrix on $\Omega,$ meaning $P(w,w')\geq 0$ and 
$\sum_{w'\in\Omega}P(w,w')=1$
for all $w,w'\in\Omega.$ We further assume that $P$ is reversible with respect to $\pi$, and let $\lambda$ denote the spectral gap of $P.$

\begin{definition}[Reversibility]\label{reversible}\normalfont
    We say that $P$ is reversible with respect to $\pi$ if the detailed balance equation 
    $$\pi(w)P(w,w') = \pi(w')P(w',w)$$
    is satisfied 
    for all $w,w'\in\Omega.$
\end{definition}

\begin{definition}[Spectral gap]\label{spectral_gap}\normalfont
    If $P$ is reversible with respect to $\pi,$ then all eigenvalues are real. We let $\lambda$ denote the spectral gap of $P,$ which is the difference between the first- and second-largest eigenvalues of $P.$
\end{definition}

The following result is the theoretical foundation of our framework.
\begin{lemma}[Poincar\'{e} inequality for finite state space]\label{original}
    Let $\S$ be finite and $\W\sim \pi$ be an $\S$-valued random variable supported on $\Omega.$ Let $P$ denote a transition matrix on $\Omega$ that is reversible with respect to $\pi.$ Let $\lambda$ denote the spectral gap of $P.$ Finally, let $\W'$ denote the one-step transition of $\W$ according to $P,$ meaning
    \begin{align*}
        \P(\W'=w'\mid \W=w) = P(w,w')
    \end{align*}
    for all $w,w'\in\Omega.$ Then, we have
    \begin{align}
        \Var(f(\W))\leq \frac{1}{2\lambda}\E\left[\left(f(\W)-f(\W')\right)^2\right]\label{eq:Poincare_inequality}
    \end{align}
    for any measurable $f:\Omega\to\R$ such that $f(\W)$ has finite variance.
\end{lemma}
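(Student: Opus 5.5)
We will prove the inequality by viewing $P$ as a self-adjoint operator on the weighted space $L^2(\pi)$ over $\Omega$. Equip this space with the inner product $\langle g,h\rangle_\pi=\sum_{w\in\Omega}\pi(w)g(w)h(w)$, and let $(Pg)(w)=\sum_{w'}P(w,w')g(w')$. Since $\Omega$ is finite, every $f:\Omega\to\R$ has finite variance and the measurability hypothesis is automatic.

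\textbf{Step 1: $P$ is self-adjoint.} Detailed balance (Definition~\ref{reversible}) gives $\langle g,Ph\rangle_\pi=\sum_{w,w'}\pi(w)P(w,w')g(w)h(w')$, which is symmetric in $g$ and $h$. So $P$ is self-adjoint on $L^2(\pi)$. By the spectral theorem it has a $\langle\cdot,\cdot\rangle_\pi$-orthonormal eigenbasis $\phi_1,\dots,\phi_{|\Omega|}$ with real eigenvalues $\beta_1\ge\beta_2\ge\cdots$. Since $P\mathbf{1}=\mathbf{1}$, and $P$ is a contraction in $L^2(\pi)$ by Jensen together with stationarity of $\pi$, all eigenvalues satisfy $|\beta_j|\le1$. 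We may therefore take $\beta_1=1$ and $\phi_1=\mathbf{1}$, so that $\lambda=1-\beta_2$.

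\textbf{Step 2: express the right-hand side as a Dirichlet form.} Set $\W\sim\pi$ and $\W'\mid\W\sim P(\W,\cdot)$. Reversibility implies $\W'\sim\pi$ as well, so
\[
\E\left[\left(f(\W)-f(\W')\right)^2\right]=2\langle f,f\rangle_\pi-2\langle f,Pf\rangle_\pi=2\langle f,(I-P)f\rangle_\pi .
\]
The left-hand side is unchanged when $f$ is replaced by $f-\E f(\W)$, because $P\mathbf{1}=\mathbf{1}$. We may therefore assume $\E_\pi f=0$, which means $f\perp\phi_1$.

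\textbf{Step 3: spectral bound.} Expand $f=\sum_{j\ge2}c_j\phi_j$. Then
\[
\langle f,(I-P)f\rangle_\pi=\sum_{j\ge2}(1-\beta_j)c_j^2\ge(1-\beta_2)\sum_{j\ge2}c_j^2=\lambda\Var(f(\W)).
\]
Combining this with Step 2 gives $\Var(f(\W))\le\frac{1}{2\lambda}\E[(f(\W)-f(\W'))^2]$.

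\textbf{Main obstacle.} The only delicate point is the case $\lambda=0$, which occurs when $P$ is reducible so that eigenvalue $1$ is repeated. In that case the bound is trivially true, with the right-hand side read as $+\infty$, or the statement should be understood as assuming $\lambda>0$. The remaining care is in checking Step 1, namely that the second-largest eigenvalue is defined with respect to the $\pi$-weighted inner product, so that the orthonormal eigenbasis and the identity $\phi_1=\mathbf{1}$ are justified.
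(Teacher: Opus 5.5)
Your proof is correct and is essentially the standard spectral argument behind the result the paper imports by citation without proof (Levin--Peres--Wilmer, Lemma 13.7). That argument diagonalizes $P$ in an $L^2(\pi)$-orthonormal eigenbasis containing $\mathbf{1}$, identifies $\frac{1}{2}\E[(f(\W)-f(\W'))^2]$ with the Dirichlet form $\langle f,(I-P)f\rangle_\pi$, and bounds it below by $(1-\beta_2)\Var(f(\W))$ for centered $f$. One small correction to your closing remark: the eigenvalues of $P$ do not depend on the choice of inner product (only the orthogonality of the eigenvectors does), and the genuine interpretive point is that ``second-largest eigenvalue'' must be counted with multiplicity, which your $\lambda=0$ discussion already handles correctly.
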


\subsection{Gibbs Resampling for Random Vectors}
We now assume that $\S=\mathcal{X}^m$ where $\mathcal{X}$ is finite. In other words, we consider random vectors $\W\in \mathcal{X}^m$. Recall that $W\sim \pi$ and $\Omega\subseteq \S$ is the support of $\pi.$ From \eqref{eq:Poincare_inequality}, we see that different reversible transitions from $\W$ to $\W'$ give different variance bounds for $f(\W)$ . For any vector $v$, let $v_S= (v_i)_{i\in S}.$ In order to further manipulate the right-hand side of \eqref{eq:Poincare_inequality}, we consider the following two-step procedure to obtain $\W'\in\Omega$ given $\W\in\Omega.$

\begin{step}\label{step_one}\normalfont
    Choose any random subset $S\subseteq [m]$ (which may depend on $\W$).
\end{step}

\begin{step}\label{step_two}\normalfont
Define
$\W' = (\mathbf{W'}_S, \mathbf{W}_{-S})$ where 
\begin{align*}
\mathbf{W'}_S\sim \mathcal{L}(\mathbf{W}_S\mid S,\mathbf{W}_{-S})\qquad\text{and}\qquad
\mathbf{W'}_S \indep \mathbf{W}_S \mid S, \mathbf{W}_{-S}.
\end{align*}
Note that we specify entries of $\W'$ via indices in $S$ and $[m]\setminus S.$ 
\end{step}

We refer to this transition as Gibbs resampling for $\W$ with update set $S$, where the difference from classical Gibbs sampling \citep[Chapter 3.3]{levin2017markov} is that the update set $S$ may depend on $\W$. The following lemma is helpful for understanding the transition. 
\begin{lemma}\label{ed}
    Let $(S,\W,\W')$ be according to Steps \ref{step_one} and \ref{step_two}. Then, 
    $$(S,\W,\W')\ed (S,\W',\W).$$
\end{lemma}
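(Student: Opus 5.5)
The plan is to condition on the realized update set and on the coordinates outside it. Given these, $\W_S$ and $\W'_S$ are conditionally i.i.d., and $(S,\W,\W')$ is a fixed function of the four pieces that is symmetric under swapping $\W_S$ and $\W'_S$. Since $[m]$ has finitely many subsets and $\mathcal{X}$ is finite, everything is discrete, and I will argue directly with probability mass functions. This avoids measurability issues coming from the random index set $S$.

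First I fix a subset $s\subseteq[m]$ with $\P(S=s)>0$, together with $u\in\mathcal{X}^{[m]\setminus s}$ such that $\P(S=s,\W_{-s}=u)>0$. By Step~\ref{step_two}, conditional on $\{S=s,\W_{-s}=u\}$, the vector $\W'_s$ is independent of $\W_s$ and has the same conditional law as $\W_s$. Writing $q(a)=\P(\W_s=a\mid S=s,\W_{-s}=u)$ for $a\in\mathcal{X}^s$, this gives
\begin{align*}
\P(S=s,\W_{-s}=u,\W_s=a,\W'_s=b)=\P(S=s,\W_{-s}=u)\,q(a)\,q(b).
\end{align*}
The right-hand side is symmetric in $(a,b)$. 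Hence the joint law of $(S,\W_{-S},\W_S,\W'_S)$ is invariant under exchanging the last two components, where $\W_{-S}$ and $\W_S$ are read off coordinatewise on each event $\{S=s\}$.

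Second, I use that $\W'_{-S}=\W_{-S}$ by construction. On $\{S=s\}$ the triple $(S,\W,\W')$ equals $g_s(\W_{-s},\W_s,\W'_s)$, where $g_s(u,a,b)=\bigl(s,(a,u),(b,u)\bigr)$ and $(a,u)$ denotes the vector in $\mathcal{X}^m$ with entries $a$ on $s$ and $u$ off $s$. Likewise, $(S,\W',\W)$ equals $g_s(\W_{-s},\W'_s,\W_s)$ on the same event. Summing the symmetric joint mass function over all $(s,u,a,b)$ mapping to a given triple $(s,w,w')$ then yields
\begin{align*}
\P(S=s,\W=w,\W'=w')=\P(S=s,\W=w',\W'=w),
\end{align*}
which is exactly the claim.

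The only point needing care, and the main obstacle, is that $S$ may depend on $\W$ (including on $\W_S$). So I must check that the conditional-i.i.d.\ statement really holds given $(S,\W_{-S})$ and not merely given $\W_{-S}$ for a fixed index set. This is precisely what Step~\ref{step_two} prescribes: $\W'_S$ is drawn from $\mathcal{L}(\W_S\mid S,\W_{-S})$, conditioning on $S$ as well. With that definition the symmetry of $q(a)q(b)$ goes through without any assumption on how $S$ is chosen.
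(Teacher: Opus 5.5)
Your proposal is correct and follows the paper's own argument. Like the paper, you condition on $S=A$ and $\W_{-A}=x$, use that $\W_A$ and $\W'_A$ are conditionally i.i.d., and write the triple as a fixed function that is symmetric under swapping them. Your factorization $\P(S=s,\W_{-s}=u)\,q(a)\,q(b)$ simply makes explicit, at the level of mass functions, the conditional equality in distribution that the paper states before passing to the unconditional claim.
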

\begin{proof}
Given $S=A$ and $\W_{-A}=x$, we have
\begin{align*}
    (S,\W,\W') &= (A,(\W_A,x),(\W'_A,x))\\
    & \ed (A,(\W'_A,x),(\W_A,x))\\
    &= (S,\W',\W)
\end{align*}
since $\W_A$ and $\W'_A$ are i.i.d. given $S=A$ and $\W_{-A}=x.$ As the equality in distribution holds unconditionally as well, we conclude the proof.
\end{proof}

Let $P$ denote the $|\Omega|\times |\Omega|$ matrix given by 
\begin{align}
    P(w,w')= \P(\W'=w'\mid \W = w) \label{eq:stochastic_matrix}
\end{align}
for any $w,w'\in \Omega.$ By Lemma \ref{ed}, we see that $(\W,\W')$ is exchangeable. In particular, this implies that $P$ is reversible with respect to $\pi$, hence the Poincar\'{e} inequality is applicable. We show that the upper bound in \eqref{eq:Poincare_inequality}  can further be expressed in terms of $\W$ alone.

\begin{lemma}\label{Poincare_mod}
     Let $\W\sim\pi$ and $(S,\W,\W')$ be according to Steps \ref{step_one} and \ref{step_two}. Let $\lambda$ denote the corresponding spectral gap. Then, we have
    \begin{align}
        \Var(f(\W))\leq \frac{1}{\lambda}\E\left[(f(\W)-\E[f(\W)\mid S,\mathbf{W}_{-S}])^2\right]\label{eq:oracle_Poincare}=:\UB
    \end{align}
    for any measurable $f:\Omega\to\R$ such that $f(\W)$ has finite variance.
\end{lemma}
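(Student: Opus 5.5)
The plan is to apply Lemma \ref{original} to the transition matrix $P$ defined in \eqref{eq:stochastic_matrix}. Reversibility of $P$ with respect to $\pi$ follows from Lemma \ref{ed}. It then suffices to show that
\begin{align*}
\E\left[(f(\W)-f(\W'))^2\right] = 2\,\E\left[(f(\W)-\E[f(\W)\mid S,\W_{-S}])^2\right].
\end{align*}
Substituting this identity into \eqref{eq:Poincare_inequality} turns the factor $\frac{1}{2\lambda}$ into $\frac{1}{\lambda}$, which is exactly \eqref{eq:oracle_Poincare}.

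To prove the identity, I would condition on $\mathcal{G}:=\sigma(S,\W_{-S})$. By Step \ref{step_two}, given $\mathcal{G}$ the vectors $\W_S$ and $\W'_S$ are i.i.d. with law $\mathcal{L}(\W_S\mid S,\W_{-S})$, while $\W_{-S}$ is fixed. Hence $f(\W)$ and $f(\W')$ are conditionally i.i.d. given $\mathcal{G}$. For two conditionally i.i.d. square-integrable variables $X,X'$, expanding the square gives
\begin{align*}
\E[(X-X')^2\mid\mathcal{G}] = 2\Var(X\mid\mathcal{G}) = 2\,\E\left[(X-\E[X\mid\mathcal{G}])^2\mid\mathcal{G}\right],
\end{align*}
because the cross term factors as $\E[X\mid\mathcal{G}]^2$. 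Taking expectations by the tower property yields the identity. Finite variance of $f(\W)$ ensures every term is finite; since $\Omega$ is finite, this is automatic anyway.

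The step needing the most care is conditioning on $S$ when $S$ may depend on $\W$. I would work on the events $\{S=A,\W_{-A}=x\}$ with positive probability, exactly as in the proof of Lemma \ref{ed}. On each such event the conditional law of $\W_A$ is well defined on the finite set, and $\W'_A$ is an independent copy, so the conditional i.i.d. claim holds cell by cell. I would then sum over the finitely many cells.

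It is also worth confirming that $\W'\in\Omega$ almost surely, so that $P$ is a genuine transition matrix on $\Omega$. This holds because $\W'_A$ is drawn from a conditional law that only charges values $v$ with $\P(S=A,\W_A=v,\W_{-A}=x)>0$, and hence $(v,x)\in\Omega$. With these points in place, the $\lambda$ in the statement is the spectral gap of this $P$, and the proof is complete.
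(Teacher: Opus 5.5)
Your proposal is correct and follows essentially the same route as the paper: apply Lemma \ref{original} to the reversible kernel supplied by Lemma \ref{ed}, and show that $\E[(f(\W)-f(\W'))^2]=2\E[(f(\W)-\E[f(\W)\mid S,\W_{-S}])^2]$ because $\W_S$ and $\W'_S$ are conditionally i.i.d. given $(S,\W_{-S})$. The only cosmetic difference is how the two squared terms are equated: the paper expands around $h(S,\W_{-S})$ and invokes Lemma \ref{ed} for this, whereas you read it off directly from the conditional identical distribution via $\E[(X-X')^2\mid\mathcal{G}]=2\Var(X\mid\mathcal{G})$.
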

\begin{proof}[Proof of Lemma \ref{Poincare_mod}]
    Write
\begin{align*}
    \frac{1}{2\lambda}\left(f(\mathbf{W})-f(\mathbf{W'})\right)^2 &= \frac{1}{2\lambda}\big[\left(f(\mathbf{W})-h(S,\mathbf{W}_{-S})\right)+ \left(h(S,\mathbf{W}_{-S})-f(\mathbf{W'})\right)\big]^2\\
    &= \frac{1}{2\lambda}\left(f(\mathbf{W})-h(S,\mathbf{W}_{-S})\right)^2 + \frac{1}{2\lambda}\left(f(\mathbf{W'})-h(S,\mathbf{W}_{-S})\right)^2 + \frac{1}{\lambda}R
\end{align*}
where $h(S,\mathbf{W}_{-S})=\E[f(\mathbf{W})\mid S,\mathbf{W}_{-S}]$ and 
\begin{align*}
    R=\left(f(\mathbf{W})-h(S,\mathbf{W}_{-S})\right)\left(h(S,\mathbf{W}_{-S})-f(\mathbf{W'})\right).
\end{align*}
By Lemma \ref{ed}, we see that
\begin{align*}
    \E\left[(f(\mathbf{W})-h(S,\mathbf{W}_{-S}))^2\right] =\E\left[(f(\mathbf{W'})-h(S,\mathbf{W}_{-S}))^2\right]. 
\end{align*}
Moreover, since $\W_S$ and $\W'_S$ are i.i.d. given $(S,\W_{-S}),$ we see that
\begin{align*}
    \E[R\mid S,\mathbf{W}_{-S}]&= \E\left[f(\mathbf{W})-h(S,\mathbf{W}_{-S})\mid S,\mathbf{W_{-S}}\right]\cdot \E\left[h(S,\mathbf{W}_{-S})-f(\mathbf{W'})\mid S,\mathbf{W_{-S}}\right]\\
    &= 0
\end{align*}
and thus $\E[R]=0.$ Bringing everything together and applying Lemma \ref{original} gives the desired inequality.
\end{proof}

\subsection{Proxy-based Variance Estimation}
Although we generally cannot compute the term $\E[f(\W)\mid S,\W_{-S}]$ in \eqref{eq:oracle_Poincare} from a single observation of $\W,$ we can approximate it using any computable proxy $f^{-S}=g(S,\W_{-S}).$ Moreover, the conservative guarantee will remain since the conditional expectation minimizes the mean-squared error. This insight gives our main theoretical result.

\begin{theorem}\label{Main}
    Let $\W\sim\pi$ be any random vector in $\mathcal{X}^m$ where $\mathcal{X}$ is finite. Next, let $(S,\W,\W')$ be according to Steps \ref{step_one} and \ref{step_two} and let $\lambda$ denote the corresponding spectral gap. Finally, let $f^{-S}=g(S,\W_{-S})$ denote any computable proxy. Then, our variance estimator for $f(\W)$ is given by
    \begin{align}
        \widehat V&= \frac{1}{\lambda}\E\left[(f(\W)-g(S,\mathbf{W}_{-S}))^2\mid \mathbf{W}\right]\nonumber\\
        &=\frac{1}{\lambda}\sum_{A\subseteq [m]}\Pr[S=A\mid \mathbf{W}]\cdot (f(\W)-g(A,\mathbf{W}_{-A}))^2.\label{eq:main}
    \end{align}
    Moreover, we have $\E[\widehat V]\geq \Var(f(\W)).$
\end{theorem}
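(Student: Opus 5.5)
The plan is to reduce the claim to Lemma \ref{Poincare_mod} through two observations: a tower-property identity for $\E[\widehat V]$, and the $L^2$-optimality of conditional expectation.

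\textbf{Step 1: the two expressions for $\widehat V$ agree.} The conditional law of $S$ given $\W$ is the one chosen in Step \ref{step_one}. Conditioning on $\W$ and summing over the finitely many subsets $A\subseteq[m]$ gives the second line of \eqref{eq:main} directly. Here $f(\W)$ is $\W$-measurable and $g(A,\W_{-A})$ is deterministic given $\W$ and $S=A$.

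\textbf{Step 2: rewrite $\E[\widehat V]$.} By the tower property,
\begin{align*}
\E[\widehat V]=\frac{1}{\lambda}\E\left[(f(\W)-g(S,\W_{-S}))^2\right].
\end{align*}
Set $h(S,\W_{-S})=\E[f(\W)\mid S,\W_{-S}]$. I will decompose $f(\W)-g = (f(\W)-h)+(h-g)$ and expand the square. The term $h-g$ is measurable with respect to $\sigma(S,\W_{-S})$, and $\E[f(\W)-h\mid S,\W_{-S}]=0$. Conditioning on $(S,\W_{-S})$ therefore kills the cross term. This leaves
\begin{align*}
\E[\widehat V]=\frac{1}{\lambda}\E\left[(f(\W)-h(S,\W_{-S}))^2\right]+\frac{1}{\lambda}\E\left[(h(S,\W_{-S})-g(S,\W_{-S}))^2\right]\geq \UB .
\end{align*}

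\textbf{Step 3: conclude.} Lemma \ref{Poincare_mod} gives $\UB\geq\Var(f(\W))$, so $\E[\widehat V]\geq \UB\geq \Var(f(\W))$. As a bonus, the argument shows that the slack equals exactly $\lambda^{-1}$ times the mean-squared proxy error.

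\textbf{Where care is needed.} There is no deep obstacle; the only points requiring attention are integrability and positivity. Since $\Omega$ is finite, $f$ and $g$ take finitely many values on the relevant support, so all expectations are finite. The constant $\lambda$ must be strictly positive. This holds if the chain of Step \ref{step_two} is irreducible on $\Omega$. If it is not, then $\lambda=0$, the bound in Lemma \ref{Poincare_mod} is vacuous ($+\infty$), and the inequality holds trivially.

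One more subtlety is that $S$ may depend on $\W$, so the conditional expectation $h$ must be taken with respect to the joint law of $(S,\W)$. This is exactly the object used in Lemma \ref{Poincare_mod}, so the two steps are consistent.
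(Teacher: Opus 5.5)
Your proof is correct and follows the paper's argument. Like the paper, you split $\E[\widehat V]$ by orthogonality into $\UB$ plus $\lambda^{-1}\E[(g(S,\W_{-S})-\E[f(\W)\mid S,\W_{-S}])^2]$ and then apply Lemma \ref{Poincare_mod}, merely spelling out the tower-property and cross-term steps the paper leaves implicit. One small quibble: your $\lambda=0$ aside is unnecessary, since $\widehat V$ already presupposes $\lambda>0$, and it is not quite right when $f(\W)=g(S,\W_{-S})$ almost surely (e.g.\ $S\equiv\emptyset$), where the expression is $0/0$ rather than $+\infty$.
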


\begin{proof}
    By orthogonality, we have the decomposition
     \begin{align}
        \E[\widehat V] = \frac{1}{\lambda}\E\left[(f(\W)-\E[f(\W)\mid S,\mathbf{W}_{-S}])^2\right] + \frac{1}{\lambda}\E\left[(g(S,W_{-S})-\E[f(\W)\mid S,\mathbf{W}_{-S}])^2\right]. \label{eq:decomp}
    \end{align}
    Applying Lemma \ref{Poincare_mod} concludes the proof.
\end{proof}
We see that the upper bound $\widehat {V}$ is now expressed entirely in terms of a single observation of $\W$ together with features of the resampling scheme. In practice, however, one must still compute $\P(S=A\mid \W)$ and the spectral gap $\lambda$, and these quantities may be difficult to characterize. In Section \ref{Section_spectral}, we will describe examples for which $\lambda$ has an explicit expression. Finally, when the sum in \eqref{eq:main} is computationally expensive, we can approximate $\widehat{V}$ by Monte Carlo. Let
\begin{align*}
     \widehat{V}_{\normalfont{\text{MC}}}=\frac{1}{\lambda B}\sum_{k=1}^B(f(\W)-g(S_k,\mathbf{W}_{-{S_k}}))^2
\end{align*}
where the $S_k$'s are sampled i.i.d. from the conditional law of $S\mid \W.$ Then, one can easily check that $$\E[\widehat{V}_{\normalfont\text{MC}}\mid \W] =\widehat{V}.$$

\section{The Neyman Jackknife}\label{Neyman}
We introduce the Neyman Jackknife by applying Theorem \ref{Main} to the setting of causal inference under interference. In particular, we view $\W\in\{0,1\}^m$ as the treatment vector for $m$ units. We first explain our setup for causal inference.

\subsection{Design-based Causal Inference}\label{design}

We consider a finite-population model for treatment-effect estimation with spillovers. There are $m$ intervention units labeled $1,\dots, m$ and $n$ outcome units labeled $1,\dots, n.$ Each intervention unit $j\in [m]$ receives treatment $W_j\in\{0,1\}$ and we observe $Y_i$ for each outcome unit $i\in [n].$ In the standard spillover setting,
there is a natural one-to-one pairing between intervention and outcome units, i.e., $m = n$ and $W_i$
is the treatment assigned to the $i$-th unit \citep[e.g.,][]{Aronow,Hudgens,Manski}. However, our model also accommodates more general settings, such as spatial experiments \citep{pollmann2020causal} or bipartite experiments \citep{lu2025design}.
We write $\W$ and $\Y$ for the vectors of all treatments and outcomes respectively.
We make the following two assumptions throughout.

\begin{assumption}\normalfont
\label{assu:PO}
We posit deterministic potential outcomes $$\{Y_i(\mathbf{w}): i\in [n], \mathbf{w}\in\{0,1\}^m\}$$ and assume
$Y_i = Y_i(\W)$. Each unit has a known exposure set
$\N_i \subseteq [m]$ such that $Y_i$ is only responsive to treatments
$W_j$ with $j \in \N_i$, i.e., $Y_i(\mathbf{w}) = Y_i(\mathbf{w}')$ whenever $w_j = w_j'$ for all $j \in \N_i$.
\end{assumption}

\begin{assumption}
\normalfont
\label{assu:rand}
Treatment is randomized, i.e., $\W \sim \pi$ for some known distribution $\pi$ on $\{0,1\}^m$. We refer to $\pi$ as the design.
\end{assumption}

This setting is well studied in causal inference, and several unbiased estimators for various
causal estimands are available; see \citet[Chapter 12]{wager2026causal} for a recent review.
For example, one can obtain unbiased estimates of the total treatment effect
$\tau_{\normalfont\text{TOT}} = \frac{1}{n} \sum_{i = 1} \left(Y_i(\mathbf{1}) - Y_i(\mathbf{0})\right)$
via inverse-propensity weighting \citep{lu2025design},
\begin{equation}
\hat\tau_{\normalfont\text{IPW}} = \frac{1}{n} \sum_{i = 1}^n \left(\frac{\mathbf{1}\{\forall j\in \N_i, W_j=1\}}{\P_\pi(\{\forall j\in \N_i,W_j=1\})}-\frac{\mathbf{1}\{\forall j\in \N_i, W_j=0\}}{\P_\pi(\{\forall j\in \N_i,W_j=0\})}\right)Y_i.
\end{equation}
Our goal is to provide a simple and versatile framework for conservative
variance estimation of such treatment-effect estimators.

We emphasize that we work under the ``Neymanian'' sampling
model where the potential outcomes $Y_i(\cdot)$ are fixed and only the treatments are random. This
sampling model presents a number of challenges for variance estimation. In the simple setting where
each unit only responds to their own treatment and there are no spillovers (i.e., $\N_i = \{i\}$
for all $i$), it is well known that simple variance estimators that would be consistent under
i.i.d. population sampling assumptions are conservative under the Neyman model---and unbiased variance
estimation is generally not possible \citep{imbens2004nonparametric,neyman1923applications}.
In the presence of spillovers the situation is worse: the ``usual'' variance estimators
may even be anti-conservative without further assumptions on potential outcomes
\citep[Chapter 12.2]{wager2026causal}.

\subsection{A Jackknife for Treatment-Effect Estimators}
The classical Jackknife \citep{miller1968jackknifing,tukey1986collected3} estimates the variance of an estimator
by repeatedly recomputing leave-one-out versions of it. Consider a setting where
$X_1, \, \ldots, \, X_n$ are independent (though not necessarily identically distributed)
observations, let $f(\cdot)$ be any measurable function of these random variables, and let
$f^{(-i)}(\cdot)$ be the recomputed version of $f(\cdot)$ without $X_i$. Let $f^*(X_1,\dots, X_n)$ denote the average of the leave-one-out estimates. Then, \citet{efron1981jackknife}
showed that the following Jackknife variance estimator,\footnote{The Jackknife as used in practice
also often includes a multiplicative factor $(n-1)/n$; see \citet{efron1981jackknife} for a further
discussion.}
\begin{equation}
\label{eq:tukey_jack}
\widehat V_{\text{JK}} = \sum_{i = 1}^n \left(f^*(X_1,\dots, X_n)-f^{(-i)}\left(X_1, \, \ldots, \, X_{i-1}, \, X_{i+1}, \, \ldots, \, X_n\right)\right)^2,
\end{equation}
is conservative for the variance of $f(X_1, \, \ldots, \, X_n)$ provided it has a finite
second moment. However, there are two obstacles to using this classical Jackknife in our setting.
First, most pressingly, potential outcomes are taken as deterministic under the Neyman model, so
we cannot naturally talk about sampling them. Second, in models with spillovers, it is more natural
to simultaneously ignore mutually exposed sets of units than to omit units one by one.

\subsubsection{Construction of Neyman Jackknife}
Our proposed Neyman Jackknife is a variant of the classical Jackknife that addresses both of the challenges described above. Let $\hat\tau = f(\W)$ denote any treatment-effect estimator whose
variance we wish to estimate. We write $f(\W,\Y)$ when we wish to emphasize that the estimator is computed from the assigned treatments $\W$ and observed outcomes $\Y.$ Strictly speaking, however, the observed $\Y$ are determined by $\W$ under the design-based framework, so this is only a notational convenience. The Neyman Jackknife takes as input:
\begin{enumerate}
\item An index-sampling rule $\mu$ that lets us draw a random subset $S\subseteq [m]$, namely
$$\P(S=A \mid \W = \mathbf{w}) = \mu_{\mathbf{w}}(A)$$
for each $A\subseteq [m].$ We will often refer to the choice of $\mu$ through the induced distribution of $S$.
\item A computable proxy $f^{(-S)}(\cdot)$ that takes as input treatments
not in $S$ and outcomes not exposed to treatments in $S$, i.e.,
units $i\in [n]$ with $\N_i \cap S = \emptyset$.
\end{enumerate}

Given $\W,$ we can obtain $\W'$ by drawing $S \sim \mu_\W$ and then re-randomizing entries in $S$ according to Step 
\ref{step_two}. We refer to the transition from $\W$ to $\W'$ as Gibbs rerandomization with index-sampling rule $\mu$ (or update set $S$). Let $\lambda$ denote the spectral gap of the transition from $\W$ to $\W'$ (see Definition \ref{spectral_gap} and Equation \eqref{eq:stochastic_matrix}). The Neyman Jackknife variance estimator is given as
\begin{align}
\widehat V &= \frac{1}{\lambda} \E_{S \sim \mu_{\W}}\left[{\left(f(\W, \, \Y)-f^{(-S)}\left(\W_{-S}, \, \{Y_i : \N_i \cap S = \emptyset\}\right)\right)^2}\right]\nonumber \\
&=\frac{1}{\lambda} \sum_{A\subseteq [m]} \mu_{\W}(A)\cdot{\left(f(\W, \, \Y)-f^{(-A)}\left(\W_{-A}, \, \{Y_i : \N_i \cap A = \emptyset\}\right)\right)^2}\label{eq:NJ_main}
\end{align}
which is conservative by Theorem \ref{Main}.

\begin{corollary}
\label{main_NJ}
Under Assumptions \ref{assu:PO} and \ref{assu:rand}, let $\widehat{V}$ denote the Neyman Jackknife estimator as
given in \eqref{eq:NJ_main}. Then, we have
$$\E_{\W\sim\pi}[\widehat V]\geq \Var_{\W\sim \pi}(f(\W)).$$
\end{corollary}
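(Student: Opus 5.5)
The plan is to reduce Corollary \ref{main_NJ} directly to Theorem \ref{Main}. The only thing to check is that the leave-out recomputation $f^{(-S)}(\W_{-S},\{Y_i:\N_i\cap S=\emptyset\})$ is a legitimate computable proxy in the sense of that theorem. That is, it must be expressible as $g(S,\W_{-S})$ for a deterministic function $g$.

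\emph{Step 1: set up the random objects.} We take $\W\sim\pi$ on $\mathcal{X}^m$ with $\mathcal{X}=\{0,1\}$, which is finite, as Assumption \ref{assu:rand} provides. Let $S\sim\mu_\W$ be the update set in Step \ref{step_one}, which may depend on $\W$, and let $\W'$ be produced by Step \ref{step_two}. By Lemma \ref{ed} and the discussion around \eqref{eq:stochastic_matrix}, the resulting transition is reversible with respect to $\pi$. Its spectral gap $\lambda$ is exactly the one appearing in \eqref{eq:NJ_main}.

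\emph{Step 2: show the proxy depends only on $(S,\W_{-S})$.} This is the key point. Fix $A\subseteq[m]$ and an outcome unit $i$ with $\N_i\cap A=\emptyset$. By Assumption \ref{assu:PO}, $Y_i=Y_i(\W)$ depends on $\W$ only through $\W_{\N_i}$. Since $\N_i\subseteq[m]\setminus A$, the vector $\W_{\N_i}$ is a sub-vector of $\W_{-A}$. Hence there is a deterministic function with $Y_i=\phi_{i,A}(\W_{-A})$, because the potential outcomes are fixed and the $\N_i$ are known.

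It follows that
\[
g(A,\W_{-A}) := f^{(-A)}\bigl(\W_{-A},\{\phi_{i,A}(\W_{-A}):\N_i\cap A=\emptyset\}\bigr)
\]
is a measurable function of $(A,\W_{-A})$ alone, and on the realized $\W$ it equals the recomputed estimator. We also use the fact that $f(\W,\Y)=f(\W)$ is a function of $\W$, and that it has finite variance because $\Omega$ is finite.

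\emph{Step 3: conclude.} Substituting $g$ shows that \eqref{eq:NJ_main} coincides term by term with \eqref{eq:main}, with $\Pr[S=A\mid\W]=\mu_\W(A)$. Theorem \ref{Main} then gives $\E_{\W\sim\pi}[\widehat V]\ge\Var_{\W\sim\pi}(f(\W))$.

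The main obstacle is Step 2. It requires being careful that the recomputation is not secretly using $\W_S$ through the outcomes. That is exactly what the exposure restriction $\N_i\cap S=\emptyset$ rules out. The rest is a direct invocation of Theorem \ref{Main}.
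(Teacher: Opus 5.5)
Your proposal is correct and matches the paper, which obtains the corollary by applying Theorem~\ref{Main} directly. Your Step~2 makes explicit the check the paper leaves implicit: by Assumption~\ref{assu:PO}, the recomputed $f^{(-S)}$ depends only on $(S,\W_{-S})$, so it is a legitimate computable proxy $g(S,\W_{-S})$.
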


\subsubsection{Comparing the Classical and Neyman Jackknives}
One difference between $\widehat{V}$ and $\widehat{V}_{\normalfont\text{JK}}$ is that the Neyman Jackknife uses the original $f$ while the classical Jackknife uses $f^*,$ the average of the leave-one-out estimates. However, when $f$ is of the form 
\begin{align}
    f(X_1,\dots, X_n)= \frac{1}{n}\sum_{i=1}^n \psi_i(X_i),\label{eq:linear}
\end{align} we see that the recomputed leave-one-out versions are
\begin{align}
    f^{(-i)}(X_1,\dots, X_{i-1},X_{i+1},\dots, X_n) = \frac{1}{n-1}\sum_{k\neq i}\psi_k(X_k)\label{eq:unadjusted}
\end{align}
and we get
\begin{align*}
    f^*(X_1,\dots, X_n) &= \frac{1}{n}\sum_{i=1}^n \left(\frac{1}{n-1}\sum_{k\neq i}\psi_k(X_k)\right) 
    = f(X_1,\dots, X_n).
\end{align*}
Another difference between $\widehat{V}$ and $\widehat{V}_{\normalfont{\text{JK}}}$ is that the Neyman Jackknife allows $f^{-S}$ to be any function of $\W_{-S}$, providing more flexibility than the classical Jackknife.

Next, modulo the choice of $f^*$ versus $f,$ we check that the Neyman Jackknife generalizes the classical Jackknife. Consider a Bernoulli-randomized experiment where $m=n$ and the intervention units match the outcome units. Let each unit $i\in [n]$ be independently treated with probability $\pi_i\in(0,1)$.  We assume SUTVA so that there are no spillovers, i.e., $\N_i = \{i\}$. Set $\mu$ to pick a single unit uniformly at random, independent of $\W.$ By Lemma \ref{Spectral}, the spectral gap of the resulting transition is $1/n,$ and the Neyman Jackknife estimator becomes
\begin{align*}
    \widehat V = \sum_{i=1}^n \left(f(\W, \, \Y)-f^{(-i)}\left(\W_{-i}, \Y_{-i}\right)\right)^2,
\end{align*}
analogous to the classical Jackknife. In particular, if $f(\W)$ is the IPW estimator
\begin{align*}
    f(\W)= \frac{1}{n}\sum_{i=1}^n \left(\frac{W_i}{\pi_i}-\frac{1-W_i}{1-\pi_i}\right)Y_i,
\end{align*}
we see that $f(W_1,\dots, W_n)$ is of the form \eqref{eq:linear}. If we choose $f^{(-i)}$ as in \eqref{eq:unadjusted}, then $\widehat{V}$ recovers the classical Jackknife. On the other hand, if we choose denominator $n$ instead of $n-1$ and thus
\begin{align*}
    f^{(-i)}\left(\W_{-i}\right) = \frac{1}{n}\sum_{j\neq i}\left(\frac{W_j}{\pi_j}-\frac{1-W_j}{1-\pi_j}\right)Y_j,
\end{align*}
then we get
\begin{align*}
    \widehat{V} = \frac{1}{n^2}\sum_{i=1}^n\left(\frac{W_i}{\pi_i^2} + \frac{1-W_i}{(1-\pi_i)^2}\right)^2Y_i^2,
\end{align*}
recovering the Neyman-like variance estimator for the IPW estimator under the Bernoulli design \citep[Chapter 12]{wager2026causal}.

Finally, in the proof of Theorem \ref{Main}, the Poincar\'{e} inequality played a key role in establishing
the conservativeness result. In their analysis of the original Jackknife, \citet{efron1981jackknife}
used an inequality now often called the Efron--Stein inequality for a similar purpose. It is
well known that the Efron--Stein inequality is a special case of the Poincar\'e inequality for independent random variables under Glauber dynamics, i.e., with each random variable re-randomized one by one \citep[Theorem 2.3 and Section 2.3.2]{vanhandel2016prob}. Thus, our argument
can be thought of as a direct generalization of that used in \citet{efron1981jackknife}.

\subsection{Spectral Gap of Gibbs Rerandomization}\label{Section_spectral}
The key step in utilizing the Neyman Jackknife is computing the spectral gap $\lambda$ of the Gibbs rerandomization. Under the Bernoulli design or the completely randomized design, we show that $\lambda$ can be computed for a variety of index-sampling rules $\mu.$ We include all proofs in the appendix. First, we consider the Bernoulli design.
\begin{lemma}\label{Spectral}
    Let each unit $j\in [m]$ independently receive treatment $W_j\sim\text{Ber}(\pi_j)$ with $\pi_j\in(0,1)$. Assume the index-sampling rule $\mu$ is independent of $\W$ so that $S\indep \W.$ Then, we have
    \begin{align*}
        \lambda &= \min_{A\subseteq[m]:A\neq\emptyset}\P(S\cap A \neq \emptyset)\\
        &= \min_{i\in [m]}\P(i\in S).
    \end{align*}
\end{lemma}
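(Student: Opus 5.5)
The plan is to diagonalize the transition operator $P$ exactly in the Fourier--Walsh basis adapted to the product measure $\pi=\bigotimes_j \text{Ber}(\pi_j)$. For each $j\in[m]$ I will set
\[
\phi_j(w_j)=\frac{w_j-\pi_j}{\sqrt{\pi_j(1-\pi_j)}},
\]
and for each $A\subseteq[m]$ define $\chi_A(\mathbf{w})=\prod_{j\in A}\phi_j(w_j)$, with $\chi_\emptyset\equiv 1$. Since the coordinates are independent and each $\phi_j$ has mean zero and unit variance under $\text{Ber}(\pi_j)$, the family $\{\chi_A\}_{A\subseteq[m]}$ is an orthonormal basis of $L^2(\pi)$. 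This basis has $2^m$ elements, which matches $|\Omega|=2^m$ because every $\pi_j\in(0,1)$.

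Next I will compute the action of $P$, viewed as the operator $(Pf)(\mathbf{w})=\E[f(\W')\mid \W=\mathbf{w}]$, on each $\chi_A$. Because $S\indep\W$ and $\pi$ is a product measure, the conditional law $\mathcal{L}(\W_S\mid S,\W_{-S})$ is just the product of the marginals on $S$. So Step \ref{step_two} redraws each coordinate in $S$ independently from $\text{Ber}(\pi_j)$, independently of everything else. Conditioning on $S=B$, the product $\chi_A(\W')$ factors as $\prod_{j\in A\setminus B}\phi_j(w_j)\prod_{j\in A\cap B}\phi_j(W'_j)$. Its conditional expectation is therefore $\chi_A(\mathbf{w})$ if $A\cap B=\emptyset$, and $0$ otherwise, since any fresh coordinate contributes a mean-zero factor. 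Averaging over $B$ gives
\[
P\chi_A=\P(S\cap A=\emptyset)\,\chi_A .
\]
Thus the eigenvalues of $P$ are $1-\P(S\cap A\neq\emptyset)$ for $A\subseteq[m]$. The top eigenvalue is $1$, attained at $A=\emptyset$. Reading the spectral gap in Definition \ref{spectral_gap} as $1$ minus the largest eigenvalue over the nonconstant eigenfunctions (the $\chi_A$ with $A\ne\emptyset$), I get $\lambda=\min_{A\neq\emptyset}\P(S\cap A\neq\emptyset)$.

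The second equality follows from monotonicity. If $\emptyset\ne A\subseteq A'$, then $\{S\cap A\neq\emptyset\}\subseteq\{S\cap A'\neq\emptyset\}$, so the minimum over nonempty $A$ is attained at a singleton $A=\{i\}$. For a singleton, $\P(S\cap\{i\}\neq\emptyset)=\P(i\in S)$.

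The main subtlety is the definitional one. If the eigenvalue $1$ has multiplicity greater than one, which happens when some $i$ satisfies $\P(i\in S)=0$, then "the difference between the first- and second-largest eigenvalues" must be read as counting eigenvalues with multiplicity. With that reading the formula still gives $\lambda=0$, consistent with the variational characterization used in Lemma \ref{original}. The other step needing care is justifying the conditional-law simplification, i.e.\ that $S\indep\W$ together with the product structure makes the Gibbs redraw a fresh independent draw on $S$; this is direct from Step \ref{step_two}.
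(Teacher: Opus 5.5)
Your proposal is correct and follows the paper's proof essentially line for line. Like the paper, you show that each $\phi_A$ is an eigenfunction of the transition operator with eigenvalue $\P(A\cap S=\emptyset)$, by conditioning on $S$ and noting that the resampled coordinates contribute fresh, independent, mean-zero factors; you then read off $\lambda = 1-\max_{A\neq\emptyset}\P(A\cap S=\emptyset)$ and reduce to singletons by monotonicity. Your remark about counting the eigenvalue $1$ with multiplicity when some $\P(i\in S)=0$ is a small clarification that the paper leaves implicit.
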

This is particularly nice as $S$ may be arbitrary in shape and size, as long as it is independent of $\W.$ For the completely randomized design, we also have the following result.
\begin{lemma}\label{Spectral_CRT_1}
    Let $\W$ follow the completely randomized design with $n_1$ treated units and $n_0$ control units so that $n_1+n_0=m$. Assume the index-sampling rule $\mu$ is independent of $\W$ so that $S\indep W.$ Further assume that 
    \begin{align*}
        \P(S=A) = p(|A|)
    \end{align*}
    for some function $p:\{0,1,\dots, m\}\to [0,1]$. Then we have
    \begin{align*}
        \lambda = \frac{\E|S|-1 + \P(S=\emptyset)}{m-1}.
    \end{align*}
\end{lemma}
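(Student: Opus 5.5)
Throughout I write $p_k=\P(|S|=k)$. The plan is to diagonalize the Gibbs rerandomization with representation theory of the symmetric group $\mathfrak S_m$, and then read off the second-largest eigenvalue.

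\emph{Step 1: identify the transition.} Under the completely randomized design, $\W$ is uniform on $\Omega=\{\mathbf w\in\{0,1\}^m:\sum_j w_j=n_1\}$. Given $S=A$ and $\W_{-A}$, the conditional law of $\W_A$ is uniform over all arrangements of the $\sum_{j\in A}W_j$ ones within $A$. Hence Step~\ref{step_two} amounts to applying a uniformly random permutation $\sigma_A$ of the coordinates in $A$. Because $S\indep\W$ and $\P(S=A)=p(|A|)$, the matrix $P$ in \eqref{eq:stochastic_matrix} is the action on $\Omega\cong\{n_1\text{-subsets of }[m]\}$ of the random permutation $\sigma=\sigma_S$. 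The law of $\sigma$ is conjugation invariant. It is also a mixture $P=\sum_k p_k P_k$, where $P_k=\E_{|A|=k}[\Pi_A]$ and $\Pi_A$ is the orthogonal projection onto $\mathrm{Sym}(A)$-invariant functions. In particular $P$ is self-adjoint and PSD, so the spectral gap is governed by the second-largest eigenvalue rather than by negative ones.

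\emph{Step 2: diagonalize.} The permutation module $M^{(m-n_1,n_1)}=\mathbb R^\Omega$ decomposes multiplicity-freely as $\bigoplus_{j=0}^{\min(n_0,n_1)}S^{(m-j,j)}$. Since $P$ commutes with the $\mathfrak S_m$ action, Schur's lemma shows that $P$ acts on $S^{(m-j,j)}$ as a scalar:
\[
\beta_j=\sum_k p_k\,e_j(k),\qquad e_j(k)=\frac{\E[\chi_j(\sigma_A)]}{d_j}=\frac{\dim\big(S^{(m-j,j)}\big)^{\mathrm{Sym}(A)}}{d_j},\quad |A|=k,
\]
where $\chi_j$ is the character and $d_j$ the dimension of $S^{(m-j,j)}$. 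The case $j=0$ is the trivial representation, with eigenvalue $1$.

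For $j=1$ we have $\chi_1(\sigma)=\mathrm{fix}(\sigma)-1$ and $d_1=m-1$. A uniform permutation of $k\ge1$ points has one fixed point in expectation, so $\E[\mathrm{fix}(\sigma_S)]=m-\E|S|+1-\P(S=\emptyset)$. Therefore
\[
1-\beta_1=\frac{\E|S|-1+\P(S=\emptyset)}{m-1},
\]
which is exactly the claimed value of $\lambda$. It remains to show $\beta_j\le\beta_1$ for $2\le j\le\min(n_0,n_1)$.

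\emph{Step 3 (main obstacle): monotonicity in $j$.} Because every $p_k\ge0$, it suffices to prove $e_j(k)\le e_1(k)$ for each fixed $k$. I would compute the invariant dimensions by counting orbits. The group $\mathrm{Sym}(A)$ acts on the $j$-subsets $B$ of $[m]$ and fixes $[m]\setminus A$ pointwise. An orbit is therefore determined by the size $|B\cap A|$ together with the set $B\setminus A$. This gives
\[
N_j(k)=\sum_{i=0}^{\min(j,k)}\binom{m-k}{j-i}
\]
orbits, and hence
\[
\dim\big(S^{(m-j,j)}\big)^{\mathrm{Sym}(A)}=N_j(k)-N_{j-1}(k),\qquad d_j=\binom mj-\binom m{j-1}.
\]
As a sanity check, $j=1$ gives $e_1(k)=(m-k)/(m-1)$ for $k\ge1$, matching Step~2.

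The remaining task is the inequality
\[
\frac{N_j(k)-N_{j-1}(k)}{\binom mj-\binom m{j-1}}\le\frac{m-k}{m-1}\qquad\text{for }2\le j\le m/2 .
\]
I would try to prove it first combinatorially. The left side should equal the probability that a suitable random $j$-configuration avoids $A$, so it should decay like $\big((m-k)/m\big)^j$. If that route fails, I would prove it by induction on $j$ using Pascal's rule on both numerator and denominator. The trivial cases are handled separately: for $k\le1$ both sides equal $1$, and for $k=m$ both vanish when $j\ge1$. Once this inequality holds, the second-largest eigenvalue of $P$ is $\beta_1$, and the formula follows.
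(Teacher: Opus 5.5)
\textbf{Verdict: there is a gap, in Step 3.} Steps 1 and 2 are sound. $P$ is an average of permutation operators whose law is conjugation-invariant, so by Schur's lemma it acts as a scalar on each $S^{(m-j,j)}$. Your value of $1-\beta_1$ is exactly the claimed $\lambda$. Reducing to $e_j(k)\le e_1(k)$ for each fixed $k$ is also how the paper passes from Lemma~\ref{Spectral_CRT_2} to Lemma~\ref{Spectral_CRT_1}.

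Step 3, however, is the whole content of the lemma beyond exhibiting one eigenvalue, and you do not prove it. Without $\beta_j\le\beta_1$ for $j\ge 2$, you only know that $\beta_1$ is \emph{some} nontrivial eigenvalue. That gives $\lambda\le(\E|S|-1+\P(S=\emptyset))/(m-1)$, not equality. The heuristic you offer also points the wrong way. Since $e_j(1)=1$ for every $j$ (a single-site update never moves $\W$ under the completely randomized design), there is no decay like $((m-k)/m)^j$.

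The missing idea is a closed form that exposes a convexity structure. Your orbit counts telescope:
\[
N_j(k)-N_{j-1}(k)=\binom{m-k}{j}-\binom{m-k}{j-k-1},
\]
where the second term vanishes when $j\le k$. Combine this with $\binom mj-\binom m{j-1}=\binom mj\frac{m-2j+1}{m-j+1}$, the identity $\binom mj\binom{m-j}{k}=\binom mk\binom{m-k}{j}$, and the hockey-stick identity. You get
\[
e_j(k)=\frac{k+1}{\binom mk}\cdot\frac{1}{m-2j+1}\sum_{i=j}^{m-j}\binom ik .
\]
This is a factor independent of $j$ times the average of the convex sequence $i\mapsto\binom ik$ over the symmetric window $\{j,\dots,m-j\}$.

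Going from $j$ to $j+1$ removes the outer pair $\{j,m-j\}$. Convexity gives
\[
\binom jk+\binom{m-j}k\ge\binom ik+\binom{m-i}k \quad\text{for } j\le i\le m-j .
\]
So removing the outer pair cannot raise the window average, and $e_j(k)$ is non-increasing in $j\le m/2$. This is precisely the paper's argument: its $c_d$ and the hockey-stick step in the proof of Lemma~\ref{Spectral_CRT_2}.

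With this step added, your proof is complete. Your Specht-module and orbit-counting derivation of the eigenvalues is then a cleaner route than the paper's, which goes through Filmus's basis and explicit conditioning on $w_\star$.
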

Note that the condition $\P(S=A) = p(|A|)$ is equivalent to saying that for each $L$ with $\P(|S|=L)>0,$ the distribution of $S$ given $|S|=L$ is the uniform distribution on subsets of size $L.$ Another natural update for the completely randomized design is to select from the treated and control units separately. Although in this case $S$ depends on $\W,$ we can still utilize Lemma \ref{Spectral_CRT_1} to obtain the following result. 
\begin{corollary}\label{CRT_pair}
     Let $\W$ follow the completely randomized design with $n_1$ treated units and $n_0$ control units. Let $\mu$ denote the index-sampling rule that, conditional on $\W$, selects uniformly at random one treated unit and one control unit, and returns the subset $S$ consisting of these two units. Then,
     \begin{align*}
         \lambda = \frac{m}{2n_1n_0}.
     \end{align*}
\end{corollary}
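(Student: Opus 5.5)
The plan is to reduce the corollary to Lemma \ref{Spectral_CRT_1} by writing both transitions as lazy versions of one common ``swap'' chain. Throughout, assume $n_1,n_0\geq 1$ so that the rule $\mu$ is well defined. Let $Q$ be the transition on $\Omega=\{\mathbf{w}\in\{0,1\}^m:\sum_j w_j=n_1\}$ that picks a treated unit and a control unit uniformly at random (there are exactly $n_1n_0$ such pairs for every $\mathbf{w}\in\Omega$) and swaps their treatments.

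\textbf{Step 1: the corollary's transition.} Fix a selected pair $S=\{i,j\}$ with $W_i=1$ and $W_j=0$. Conditional on $S$ and $\W_{-S}$, the pair $(W_i,W_j)$ must still contain exactly one treated unit. Under the completely randomized design, the two configurations $(1,0)$ and $(0,1)$ are equally likely given $\W_{-S}$. Hence Step \ref{step_two} keeps the current configuration with probability $1/2$ and swaps with probability $1/2$. This gives
\[
P=\tfrac12 I+\tfrac12 Q.
\]

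\textbf{Step 2: the comparison chain.} Let $S$ instead be a uniformly random subset of size $2$, drawn independently of $\W$. This rule satisfies the hypothesis of Lemma \ref{Spectral_CRT_1} with $\E|S|=2$ and $\P(S=\emptyset)=0$, so its spectral gap is $\lambda_2=1/(m-1)$.

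Now describe this transition explicitly.
\begin{itemize}
\item If the two chosen units share the same treatment, Gibbs rerandomization leaves $\W$ unchanged.
\item The chosen pair is discordant with probability $n_1n_0/\binom{m}{2}$, which does not depend on $\mathbf{w}\in\Omega$.
\item Given that the pair is discordant, it is uniform among the $n_1n_0$ discordant pairs, and by Step 1 it is swapped with probability $1/2$.
\end{itemize}
Therefore
\[
P_2=(1-c)I+cQ,\qquad c=\frac{n_1n_0}{m(m-1)}.
\]

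\textbf{Step 3: transferring the gap.} For a matrix of the form $aI+bQ$ with $b>0$, the eigenvalues are $a+b\theta$, where $\theta$ ranges over the eigenvalues of $Q$. The ordering is preserved, so the spectral gap equals $b$ times the gap of $Q$. The matrix $Q$ is reversible with respect to the uniform law on $\Omega$, being symmetric. Applying this observation to $P_2$ gives
\[
\lambda_Q=\frac{\lambda_2}{c}=\frac{m}{n_1n_0}.
\]
Applying it to $P$ then gives
\[
\lambda=\tfrac12\lambda_Q=\frac{m}{2n_1n_0}.
\]

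The main point needing care is Step 2. We must verify that, for every state $\mathbf{w}\in\Omega$, the uniform-pair chain restricted to discordant pairs coincides exactly with $Q$, and that the mixing weight $c$ is the same for every state. Both facts follow because the number of discordant pairs is $n_1n_0$ regardless of $\mathbf{w}$. Once this is established, the remaining steps are linear-algebra bookkeeping.
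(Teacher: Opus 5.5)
Your proposal is correct and takes essentially the paper's route: compare against the uniform size-$2$ update set of Lemma \ref{Spectral_CRT_2} (gap $1/(m-1)$), observe that the two transition matrices are affinely related through the identity, and rescale the gap. The only difference is that you factor both chains through the swap chain $Q$ (with $P=\tfrac12 I+\tfrac12 Q$ and $P_2=(1-c)I+cQ$), whereas the paper writes $P_2=\bigl(1-n_1n_0/\binom{m}{2}\bigr)I+\bigl(n_1n_0/\binom{m}{2}\bigr)P$ directly; your explicit eigenvalue-shift argument in Step 3 is, if anything, a cleaner justification than the paper's appeal to Filmus's Definition 17.
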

\section{Applications}\label{Classical}

In this section, we apply the Neyman Jackknife to two settings for treatment-effect estimation. The first setting assumes SUTVA, while the second setting is under interference. In both examples, we assume that $m=n$ and thus the intervention units match the outcome units. 

\subsection{The Classical SUTVA Setting}
Here, we consider Neyman's classical setting where we assume SUTVA and the completely randomized design with $n_1$ treated units and $n_0$ control units. Our estimator is the difference-in-means estimator
\begin{align*}
    \hat\tau
    &=\frac{1}{n_1}\sum_{i\in T} Y_i - \frac{1}{n_0}\sum_{i\in C} Y_i.
\end{align*}
\begin{proposition}\label{Neyman_connection}
Under an index-sampling rule $\mu$ and computable proxy $f^{-S}$ as described below, we have
\begin{align}
    &\widehat V = \frac{2}{n}\left(\frac{n_0}{n_1-1}\widehat{S}_T + \frac{n_1}{n_0-1}\widehat{S}_C\right)\label{eq:Neyman_like},\\
    &\widehat V_{\text{Neyman}} = \frac{1}{n_1}\widehat S_T + \frac{1}{n_0}\widehat{S}_C,\nonumber
\end{align}    
where $\widehat V$ is our Neyman Jackknife estimator and $\widehat V_{\text{Neyman}}$ is Neyman's classical variance estimator. Indeed, $\widehat{S}_T$ and $\widehat{S}_C$ are the within-arm sample variances. 
\end{proposition}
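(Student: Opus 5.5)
The plan is to use the index-sampling rule of Corollary \ref{CRT_pair}: conditional on $\W$, pick one treated unit $I\in T$ uniformly at random and, independently, one control unit $J\in C$ uniformly at random, and set $S=\{I,J\}$. By that corollary the spectral gap is $\lambda = n/(2n_1n_0)$, because $m=n$ here. For the computable proxy $f^{-S}$ I take the leave-two-out difference in means. It drops unit $I$ from the treated arm and unit $J$ from the control arm:
\begin{align*}
f^{-\{I,J\}} = \frac{1}{n_1-1}\sum_{i\in T\setminus\{I\}}Y_i - \frac{1}{n_0-1}\sum_{i\in C\setminus\{J\}}Y_i .
\end{align*}
This proxy is legitimate. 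Under SUTVA we have $\N_i=\{i\}$, so it uses only outcomes with $\N_i\cap S=\emptyset$. Moreover, the sets $T\setminus\{I\}$ and $C\setminus\{J\}$ are determined by $\W_{-S}$ together with $S$, so $f^{-S}=g(S,\W_{-S})$. Corollary \ref{main_NJ} (equivalently Theorem \ref{Main}) therefore already gives conservativeness. The remaining work is to evaluate \eqref{eq:NJ_main} in closed form.

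The first step is to compute the residual. Write $\bar Y_T$ and $\bar Y_C$ for the arm means. The treated leave-one-out mean is $(n_1\bar Y_T - Y_I)/(n_1-1)$. Hence $\bar Y_T$ minus this mean equals $(Y_I-\bar Y_T)/(n_1-1)$, and the control arm behaves the same way. This gives
\begin{align*}
f(\W,\Y)-f^{-\{I,J\}} = \frac{Y_I-\bar Y_T}{n_1-1} - \frac{Y_J-\bar Y_C}{n_0-1}.
\end{align*}

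The second step is to average the square over $(I,J)$ drawn uniformly from $T\times C$. Since $I$ and $J$ are independent given $\W$ and $\sum_{i\in T}(Y_i-\bar Y_T)=0$, the cross term vanishes. What remains is
\begin{align*}
\frac{1}{(n_1-1)^2}\cdot\frac{1}{n_1}\sum_{i\in T}(Y_i-\bar Y_T)^2+\frac{1}{(n_0-1)^2}\cdot\frac{1}{n_0}\sum_{i\in C}(Y_i-\bar Y_C)^2 .
\end{align*}
With $\widehat S_T=\frac{1}{n_1-1}\sum_{i\in T}(Y_i-\bar Y_T)^2$ and the analogous $\widehat S_C$, this equals $\widehat S_T/(n_1(n_1-1))+\widehat S_C/(n_0(n_0-1))$.

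The last step is to multiply by $1/\lambda = 2n_1n_0/n$. This yields exactly \eqref{eq:Neyman_like}, and placing it next to the classical $\widehat V_{\text{Neyman}}$ completes the statement.

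The only delicate point is the bookkeeping. One must use the $n_1-1$ and $n_0-1$ normalization in the proxy, since that is what makes the residual a clean centered deviation. One must also remember that $\lambda$ comes from Corollary \ref{CRT_pair}, whose $S$ depends on $\W$; that dependence is already handled there via Lemma \ref{Spectral_CRT_1}.
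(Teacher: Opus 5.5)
Your proposal is correct and follows the paper's proof essentially step for step. You use the same uniform $T\times C$ pair-sampling rule with $\lambda = n/(2n_1n_0)$ from Corollary \ref{CRT_pair}, the same leave-two-out proxy, the same centered-residual identity, and the same vanishing cross term. Your averaged square, $\widehat S_T/(n_1(n_1-1)) + \widehat S_C/(n_0(n_0-1))$, multiplied by $2n_1n_0/n$, reproduces \eqref{eq:Neyman_like} exactly.
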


We see that $\widehat V$ is most similar to the Neyman estimator when $n_1 = n_0 = n/2.$ When the design is highly unbalanced,  $\widehat V$ may be smaller than the Neyman estimator, so in general neither dominates the other. To construct $\widehat V$, we choose an index-sampling rule $\mu$ and a computable proxy $f^{-S}.$ Let $T=\{i\in[n]: W_i = 1\}$ and $C=\{i\in [n]: W_i = 0\}.$ Given $\W$, sample $(J,K)$ uniformly at random from $T\times C$ and let $S=\{J,K\}.$ By Corollary \ref{CRT_pair}, we have $\lambda = n/(2n_1n_0).$ Next, given $S=\{j,k\}$ with $j\in T$ and $k\in C,$ we define the computable proxy as 
\begin{align*}
f^{-\{j,k\}}=g(\{j,k\},\W_{\{-j,k\}})= \frac{1}{n_1-1}\sum_{i\in T\setminus\{j\}}Y_i - \frac{1}{n_0-1}\sum_{i\in C\setminus \{k\}}Y_i.
\end{align*}
Our proxy is a function of $\W_{-\{j,k\}}$ since we are under SUTVA and $Y_i$ only depends on $W_i.$ Plugging in our choices of $S$ and $f^{-\{j,k\}}$ into Theorem \ref{Main}, we get
\begin{align*}
\widehat V &= \frac{2n_1n_0}{n}\sum_{j\in T}\sum_{k\in C}\frac{1}{n_1n_0}\left(\hat\tau - f^{-\{j,k\}}\right)^2\\
&=\frac{2}{n}\sum_{j\in T}\sum_{k\in C}\left(\hat\tau - f^{-\{j,k\}}\right)^2
\end{align*}
which is the Jackknife form of our estimator. Further calculations in the \hyperref[app]{Appendix} give \eqref{eq:Neyman_like}.

\subsection{Circular $2M$-dependent Time Series}\label{NW}
Place $n$ units on a cycle graph and assume that $Y_i$ only depends on the treatments of units within graph-distance $M\geq 0.$ Then, $\{Y_i\}_{i=1}^n$ can be thought of as a circular $2M$-dependent time series. We work under the Bernoulli design, where unit $i$ is independently treated with probability $\pi_i\in(0,1).$ Consider the IPW estimator
\begin{align*}
    \hat\tau &= \frac{1}{n}\sum_{i=1}^n \psi_i=\frac{1}{n}\sum_{i=1}^n \left(\frac{W_i}{\pi_i}- \frac{1-W_i}{1-\pi_i}\right)Y_i
\end{align*}
for the average direct effect. 
\begin{proposition}\label{NW_connection}
Under an index-sampling rule $\mu$ and computable proxy $f^{-S}$ as described below, we have
\begin{align}
\widehat V = \frac{L+2M}{L}\cdot \left(\frac{n}{n-L-2M}\right)^2\cdot \widehat V_{NW} \label{eq:NW_relation}
\end{align}
where $\widehat V$ is our Neyman Jackknife estimator and $\widehat{V}_{NW}$ is the sample-centered Newey--West estimator with circular Bartlett weights.
\end{proposition}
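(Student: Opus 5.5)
The plan is to fix the (unstated) construction as follows: $S$ is a random contiguous arc of length $L$ on the cycle, and the proxy discards every unit whose outcome is exposed to $S$. I then evaluate \eqref{eq:NJ_main} in closed form and match it to the circular Bartlett form of Newey--West.

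\textbf{Index-sampling rule and spectral gap.} Draw $s$ uniformly from $[n]$, independently of $\W$, and set $S=\{s,s+1,\dots,s+L-1\}$ (indices mod $n$). Every unit lies in exactly $L$ of the $n$ equally likely arcs, so $\P(i\in S)=L/n$ for all $i$. Since the design is Bernoulli and $S\indep\W$, Lemma~\ref{Spectral} gives $\lambda=L/n$.

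\textbf{Proxy.} Let $b=L+2M$, assume $b<n$, and let $B_s=\{s-M,\dots,s+L-1+M\}$. Exactly the units $i\in B_s$ satisfy $\N_i\cap S\neq\emptyset$, so $\psi_i$ for $i\notin B_s$ is computable from $\W_{-S}$ and the unexposed outcomes. Define
\[
f^{-S}=\frac{1}{n-b}\sum_{i\notin B_s}\psi_i .
\]

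\textbf{Reduction to block means.} Writing $\bar\psi=\hat\tau$ and $\bar\psi_{B_s}=b^{-1}\sum_{i\in B_s}\psi_i$, a one-line manipulation gives
\[
\hat\tau-f^{-S}=\frac{1}{n-b}\Big(\sum_{i\in B_s}\psi_i-\frac{b}{n}\sum_{i=1}^n\psi_i\Big)=\frac{b}{n-b}\big(\bar\psi_{B_s}-\bar\psi\big).
\]
Substituting into \eqref{eq:NJ_main} with $\mu_\W$ uniform over the $n$ starting points yields
\[
\widehat V=\frac{n}{L}\cdot\frac1n\sum_{s=1}^n\Big(\frac{b}{n-b}\Big)^2\big(\bar\psi_{B_s}-\bar\psi\big)^2=\frac{1}{L}\Big(\frac{b}{n-b}\Big)^2\sum_{s=1}^n\big(\bar\psi_{B_s}-\bar\psi\big)^2 .
\]

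\textbf{Identification with Newey--West.} Let $\hat\gamma(h)=\frac1n\sum_{i=1}^n(\psi_i-\bar\psi)(\psi_{i+h}-\bar\psi)$ be the circular sample autocovariance. Expand $\big(\sum_{i\in B_s}(\psi_i-\bar\psi)\big)^2$ and sum over all $n$ circular starting points. Each ordered pair $(i,i+h)$ with $|h|<b$ then appears in exactly $b-|h|$ blocks, giving
\[
\sum_{s=1}^n\Big(\sum_{i\in B_s}(\psi_i-\bar\psi)\Big)^2=n\sum_{|h|<b}(b-|h|)\,\hat\gamma(h).
\]
Take $\widehat V_{NW}=\frac1n\sum_{|h|<b}\big(1-|h|/b\big)\hat\gamma(h)$, the sample-centered estimator for $\Var(\bar\psi)$ with circular Bartlett weights and bandwidth $b$. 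The display then equals $n^2 b\,\widehat V_{NW}$, so
\[
\sum_{s=1}^n(\bar\psi_{B_s}-\bar\psi)^2=\frac{n^2}{b}\,\widehat V_{NW}.
\]
Plugging this in gives
\[
\widehat V=\frac1L\cdot\frac{b^2}{(n-b)^2}\cdot\frac{n^2}{b}\,\widehat V_{NW}=\frac{L+2M}{L}\Big(\frac{n}{n-L-2M}\Big)^2\widehat V_{NW},
\]
which is \eqref{eq:NW_relation}.

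\textbf{Main obstacle.} The step I expect to need the most care is the circular counting identity. Wraparound must be handled consistently, and lags $h$ and $n-h$ coincide once $2b>n$, so I would either assume $2b\le n$ or define $\hat\gamma$ on $\mathbb{Z}_n$ and count pairs directly. I also need to match the paper's exact normalization and bandwidth convention for $\widehat V_{NW}$, which I take to be Bartlett weights $1-|h|/b$ at lags $|h|<b$ and an overall factor $1/n$.
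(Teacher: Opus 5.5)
Your proposal is correct and follows essentially the same route as the paper. Both arguments write $\hat\tau - f^{-S_i}$ as a centered sum over the padded block of size $L+2M$, then count how many circular blocks contain each pair of units to obtain the Bartlett weights: the paper phrases this as $\sum_i \mathbf{v_i}\mathbf{v_i}' = (L+2M)B$, and you phrase it through circular autocovariances with the same normalization $\widehat V_{NW} = n^{-2}\mathbf{x}'B\mathbf{x}$. Your wraparound caveat is well taken, since the paper's count $L+2M-d(j,k)$ silently requires $n \geq 2(L+2M)-1$.
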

Noting that $L$ is a parameter we can freely choose, we see that $\widehat V$ is asymptotically equivalent to $\widehat V_{NW}$ as long as $M \ll L \ll n.$ To construct $\widehat V$, we again choose a random update set $S$ and a computable proxy $f^{-S}.$ Let 
$s\sim \text{Unif}\{1,2,\dots, n\}$ and $S=\{s,s+1,\dots, s+L-1\}$ where $s\indep \W$ and elements of $S$ wrap around the cycle. In other words, $S$ is a uniformly random block of size $L\geq 1$ on the cycle graph. By Lemma \ref{Spectral}, the spectral gap of the induced transition according to Steps \ref{step_one} and \ref{step_two} is $L/n.$ Finally, given $S,$ let $D\supseteq S$ denote the block of size $L+2M$ that pads $S$ on both sides, each with additional $M$ units. Then, we choose 
\begin{align*}
    f^{-S}=g(S,\W_{-S}) = \frac{1}{n - |D|}\sum_{i\notin D} \psi_i
\end{align*}
as the computable proxy. The extra deletion ensures that $f^{-S}$ is measurable on $\W_{-S}$ even under interference. Applying Theorem \ref{Main} to our choices of $S$ and $f^{-S},$ we get
\begin{align*}
    \widehat V &= \frac{n}{L}\sum_{i=1}^n \frac{1}{n}\left(\hat\tau - f^{-S_i}\right)^2\\
    &= \frac{1}{L}\sum_{i=1}^n (\hat\tau - f^{-S_i})^2
\end{align*}
where $S_i$ is the block of size $L$ that begins at unit $i$ and runs clockwise. By further calculations described in the \hyperref[app]{Appendix}, we obtain \eqref{eq:NW_relation}.

\section{Tradeoff between $S$ and $f^{-S}$}\label{tradeoff}
Equation \eqref{eq:decomp} shows the two sources of conservativeness of our estimator $\widehat V.$ The first term $\UB$ is the oracle Poincar\'{e} bound induced from the choice of the update sets $S.$ The second term is an additional approximation error resulting from our choice of the computable proxy $f^{-S}$. The main objective of our framework is choosing $S$ so that the oracle bound is not too large, while still having enough data to construct $f^{-S}$ that well-approximates the oracle conditional expectation. In this section, we explore this tradeoff in the setting of causal inference under interference.

\subsection{The Oracle Upper Bound under the Bernoulli Design}\label{theory}
In this section, we study the behavior of the oracle upper bound $\UB$ from Lemma \ref{Poincare_mod} with respect to the random update set $S.$ Roughly speaking, we show that larger $S$ corresponds to smaller $\UB.$ For exact theory, we impose the following two assumptions in this section alone.
\begin{assumption}\label{ber_assumption}\normalfont
    We assume the Bernoulli design where intervention unit $j$ is independently treated with probability $\pi_j\in (0,1)$ for each $j\in [m].$
\end{assumption}

\begin{assumption}\label{indep_assumption}\normalfont
    We assume that the random update set $S\subseteq [m]$ is independent of $\W.$
\end{assumption}

We begin by recalling tools for analyzing functions of $\W$  \citep[Chapter 8.4]{odonnell2014analysis} under Assumption \ref{ber_assumption}. Let $\H=L^2(\sigma(\W))$ denote the space of all square-integrable random variables measurable on $\W.$ Since $\pi(w)>0$ for each $w\in\{0,1\}^m,$ we see that $\H$ is a $2^m$-dimensional Hilbert space with inner product given by 
$$\langle f_1,f_2\rangle= \E[f_1f_2]$$
for any $f_1,f_2\in \H.$ Next, for any $A\subseteq [m],$ define
\begin{align*}
    \phi_A= \prod_{j\in A}\frac{W_j-\pi_j}{\sqrt{\pi_j(1-\pi_j)}}.
\end{align*}
Then, the collection $\{\phi_A\}_{A\subseteq [m]}$ forms an orthonormal basis of $\H.$ Thus, any $f\in\H$ admits the expansion
\begin{align*}
    f= \sum_{A\subseteq [m]}c_A\phi_A
\end{align*}
where $c_A = \E[f\cdot \phi_A].$ Moreover, we have the nice formula
\begin{align*}
    \Var(f(\W))= \sum_{A\neq \emptyset}c_A^2.
\end{align*}
We refer to $c_A^2$ as the Fourier mass corresponding to $A\subseteq [m].$ 

Our first result is the following explicit expression for $\UB$.

\begin{lemma}\label{oracle_exp}
    Let $\hat\tau = f(\W).$ Under Assumptions \ref{ber_assumption} and \ref{indep_assumption}, write the expansion
    $$\hat\tau = \sum_{A\subseteq [m]}c_A\phi_A.$$
    Then, we have
    \begin{align*}
        \UB &= \frac{1}{\lambda}\sum_{A\neq\emptyset}\P(S\cap A\neq\emptyset)\cdot c_A^2,\\
        \Var(\hat\tau) &= \sum_{A\neq\emptyset}c_A^2.
    \end{align*}
\end{lemma}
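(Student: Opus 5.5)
The plan is to compute the conditional expectation $\E[\hat\tau\mid S,\W_{-S}]$ explicitly in the Fourier basis and then take expectations using orthonormality. The variance identity is already recorded above as a consequence of orthonormality and $\phi_\emptyset=1$, so only the expression for $\UB$ needs work.

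First, fix $S=B$ for a deterministic $B\subseteq[m]$. Under Assumption \ref{indep_assumption}, conditioning on $(S,\W_{-S})=(B,\W_{-B})$ amounts to conditioning on $\W_{-B}$ alone. The coordinates are independent under Assumption \ref{ber_assumption}, and each factor $(W_j-\pi_j)/\sqrt{\pi_j(1-\pi_j)}$ has mean zero. It follows that
\[
\E[\phi_A\mid \W_{-B}]=\phi_A\,\mathbf{1}\{A\cap B=\emptyset\}.
\]
Indeed, if $A\cap B=\emptyset$ then $\phi_A$ is $\W_{-B}$-measurable. Otherwise $\phi_A$ factors into a $\W_{-B}$-measurable part and a product over $A\cap B$ of independent mean-zero factors, which are independent of $\W_{-B}$. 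Hence
\[
\hat\tau-\E[\hat\tau\mid S,\W_{-S}]=\sum_{A:\,A\cap S\neq\emptyset}c_A\phi_A.
\]

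Next, I compute the conditional second moment given $S=B$. Because $S\indep\W$, the law of $\W$ given $S=B$ is still the Bernoulli product measure, so the $\phi_A$ remain orthonormal conditionally. This gives
\[
\E\Big[\big(\hat\tau-\E[\hat\tau\mid S,\W_{-S}]\big)^2\,\Big|\,S=B\Big]=\sum_{A\cap B\neq\emptyset}c_A^2.
\]
Averaging over $S$ and exchanging the finite sums yields $\sum_{A}\P(S\cap A\neq\emptyset)\,c_A^2$. The term $A=\emptyset$ drops out automatically. Dividing by $\lambda$, as in the definition of $\UB$ in Lemma \ref{Poincare_mod}, gives the claim.

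The only delicate point is using independence of $S$ and $\W$ correctly. It is needed to justify that conditioning on $(S,\W_{-S})$ reduces to conditioning on $\W_{-B}$ for fixed $B$. It is also needed so that the basis $\{\phi_A\}$ stays orthonormal conditional on $S$. Everything else is routine bookkeeping with finite sums.
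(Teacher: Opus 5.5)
Your proposal is correct and follows essentially the same route as the paper. Both compute $\E[\hat\tau\mid S,\W_{-S}]=\sum_{A}\mathbf{1}\{S\cap A=\emptyset\}\,c_A\phi_A$ via $\E[\phi_A\mid \W_{-E}]=\phi_A\mathbf{1}\{A\cap E=\emptyset\}$, then use $S\indep\W$ and orthonormality to obtain $\sum_{A\neq\emptyset}\P(S\cap A\neq\emptyset)\,c_A^2$; your conditioning on $S=B$ before invoking orthonormality just makes explicit the cancellation of cross terms that the paper's final display uses implicitly.
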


Under Assumptions \ref{ber_assumption} and \ref{indep_assumption}, further recall that
\begin{align*}
    \lambda &= \min_{A\neq\emptyset}\P(S\cap A \neq \emptyset)
\end{align*}
from Lemma \ref{Spectral}. Hence, we obtain a clear interpretation of $\UB\geq \Var(\hat\tau)$. Namely, $\UB$ inflates each Fourier mass $c_A^2$ by a factor of 
\begin{align}
    \frac{\P(S\cap A\neq\emptyset)}{\lambda}\geq 1.\label{eq:ratio}
\end{align}
Roughly speaking, $\UB$ is close to $\Var(\hat\tau)$ if the ratio \eqref{eq:ratio} is close to 1 for $A$ with large Fourier mass $c_A^2.$ Moreover, the ratio \eqref{eq:ratio} becomes particularly tractable when $S$ is a uniformly random block of size $L$ on the cycle as in Section \ref{NW}. For such $S,$ the ratio \eqref{eq:ratio} turns out to be non-increasing in the block size $L$ for any $A\neq\emptyset,$ resulting in the following monotonicity result for any estimator $\hat\tau$.
\begin{lemma}\label{Mono}
    Let $S$ denote a uniformly random block of size $L\geq 1$ on the cycle with $m$ units. Under Assumptions \ref{ber_assumption} and \ref{indep_assumption}, the oracle bound $\UB$ is non-increasing in $L$ for any estimator $\hat\tau$.
\end{lemma}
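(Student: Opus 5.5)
By Lemma \ref{oracle_exp}, we have $\UB = \sum_{A\neq\emptyset} r_L(A)\, c_A^2$, where
$$r_L(A)=\frac{\P(S_L\cap A\neq\emptyset)}{\lambda_L}$$
and $S_L$ denotes a uniformly random block of size $L$. The coefficients $c_A$ depend only on $\hat\tau$ and the Bernoulli design, not on $L$, and $c_A^2\ge 0$. So it suffices to show that $r_L(A)$ is non-increasing in $L$ for every fixed nonempty $A\subseteq[m]$.

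First compute $\lambda_L$. By Lemma \ref{Spectral}, $\lambda_L=\min_i \P(i\in S_L)$. For a block with a uniform start, every unit is covered with probability $\min(L,m)/m$, so $\lambda_L = L/m$ for $1\le L\le m$. (For $L\ge m$ the block is the whole cycle, $r_L(A)=1$, and there is nothing to show.) This reduces the claim to showing that
$$\frac{m}{L}\,\P(S_L\cap A\neq\emptyset)=\frac{N_L(A)}{L}$$
is non-increasing in $L$. Here $N_L(A)$ is the number of starting positions $s\in[m]$ whose block $\{s,\dots,s+L-1\}$ meets $A$.

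Next I count $N_L(A)$ via the gaps of $A$ on the cycle. Write the elements of $A$ in cyclic order. The complement $[m]\setminus A$ splits into $k=|A|$ maximal runs (gaps) of lengths $g_1,\dots,g_k\ge 0$, with $\sum_t g_t = m-|A|$. A block misses $A$ exactly when it sits inside a single gap. A gap of length $g$ contains $(g-L+1)_+$ starting positions of such blocks. Therefore
$$N_L(A)=m-\sum_{t=1}^k (g_t-L+1)_+ ,$$
and, using $m=|A|+\sum_t g_t$,
$$\frac{N_L(A)}{L}=\frac{|A|}{L}+\sum_{t=1}^k \frac{g_t-(g_t-L+1)_+}{L}=\frac{|A|}{L}+\sum_{t=1}^k\frac{\min(g_t,L-1)}{L}.$$
The term $|A|/L$ is clearly decreasing in $L$. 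For each summand, $h(L)=\min(g,L-1)/L$ is non-increasing in $L$ for integers $L\ge1$. If $L-1\le g$, then $h(L)=1-1/L$, which increases in $L$, so this needs care. Comparing $L$ and $L+1$: when $L\le g$, the difference $h(L+1)-h(L)=\frac{1}{L(L+1)}$ is positive. It is offset by the $|A|/L$ term, which drops by $\frac{|A|}{L(L+1)}$.

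Handling this is the main obstacle: a per-summand comparison fails, and the right grouping is needed. I would bound the total increase over gaps with $g_t\ge L$. That count is at most $\#\{t: g_t\ge L\}\le k=|A|$, so the total increase is at most $\frac{|A|}{L(L+1)}$. Gaps with $g_t<L$ contribute $g_t/L$, which is non-increasing. Summing these three effects gives $N_{L+1}(A)/(L+1)\le N_L(A)/L$ for every nonempty $A$, and hence the monotonicity of $\UB$.
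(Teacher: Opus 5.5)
Your proof is correct and follows essentially the same route as the paper: you reduce via Lemma~\ref{oracle_exp} and $\lambda = L/m$ to monotonicity of $N_L(A)/L$, count the blocks that miss $A$ through the gaps of $A$, and close the step $L\to L+1$ using that at most $|A|$ gaps have length at least $L$. The paper states that last fact as $\sum_k(|B_k|+1)\le m$, yielding $r(L)\ge L\cdot|\{k:|B_k|\ge L\}|$, and your rewriting $N_L(A)=|A|+\sum_t\min(g_t,L-1)$ is only a cleaner bookkeeping of the same computation.
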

While $S$ is chosen as a random block on the cycle, we are not assuming any model of interference here. When $L=m,$ we see that $\UB$ is exactly equal to $\Var(\hat\tau).$ Thus, increasing $L$ makes $\UB$ approach $\Var(\hat\tau)$ from above. Lemma \ref{Mono} suggests a rough principle that larger update sets should correspond to smaller $\UB$. The fundamental tradeoff of our framework, however, is that increasing the size of $S$ makes the approximation of $\E[\hat\tau\mid S,\W_{-S}]$ more difficult. We explore this further in the following section.

\subsection{The Computable Proxy $f^{-S}$ for IPW Estimators}\label{heuristics}

In this section, we study a principled method to choose the computable proxy $f^{-S}$ given general $(S,\W)$. In particular, we drop Assumptions \ref{ber_assumption} and \ref{indep_assumption} and return to the general setting. We explain key ideas by considering IPW estimators
\begin{align*}
    \hat\tau &= \frac{1}{n}\sum_{i=1}^n \psi_i=\frac{1}{n}\sum_{i=1}^n \left(\frac{T_i}{p_i} - \frac{1-T_i}{1-p_i} \right)Y_i
\end{align*}
where $T_i\in\{0,1\}$ is measurable on $\W_{\N_i}$ and $p_i = \P(T_i=1).$ For clarity, $Y_i$ may be any function of $\W_{\N_i},$ not necessarily specified by $T_i.$ The example in Section \ref{NW} is the case $T_i=W_i.$

Our main target is the conditional expectation
\begin{align*}
    \E\left[\hat\tau\mid S,\W_{-S}\right] = \sum_{A\subseteq [m]}1_{S=A}\cdot \E\left[\hat\tau\mid S=A, \W_{-A}\right].
\end{align*}
Given $S=A\subseteq [m],$ we choose $f^{-A}=g(A,\W_{-A})$ to approximate $\E[\hat\tau\mid S=A,\W_{-A}].$ Define the deletion set 
$$D= \{i\in [n]: \N_i\cap A\neq \emptyset\}.$$ 
Then, a simple computation in the \hyperref[app]{Appendix} gives the following result.
\begin{proposition}\label{long}
    We have

\begin{align*}
    \E\left[\hat\tau\mid S=A,\W_{-A}\right] &= \frac{1}{n}\sum_{i\notin D}\psi_i +\frac{1}{n}\sum_{i\in D}\left(\frac{\tilde{p}_i(\W_{-A})}{p_i}\cdot m_i(1,\W_{-A}) - \frac{1-\tilde{p}_i(\W_{-A})}{1-p_i}\cdot m_i(0,\W_{-A})\right)
\end{align*}
where
\begin{align*}
    \tilde{p}_i(\W_{-A})&=\P(T_i=1\mid S=A,\W_{-A}),\\
    m_i(t,\W_{-A}) &= \E[Y_i\mid S=A, \W_{-A},T_i=t].
\end{align*}
\end{proposition}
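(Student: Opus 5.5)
By linearity of conditional expectation, I would write
\[
\E[\hat\tau\mid S=A,\W_{-A}] = \frac{1}{n}\sum_{i=1}^n \E[\psi_i\mid S=A,\W_{-A}]
\]
and then split the sum over $i\notin D$ and $i\in D$.

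For $i\notin D$ we have $\N_i\cap A=\emptyset$, so $\N_i\subseteq [m]\setminus A$. By Assumption \ref{assu:PO}, $Y_i$ is then a deterministic function of $\W_{\N_i}$, and hence of $\W_{-A}$. By the standing hypothesis, $T_i$ is also measurable on $\W_{\N_i}$, and $p_i$ is a constant. Therefore $\psi_i$ is measurable with respect to $\sigma(S,\W_{-A})$ on the event $\{S=A\}$, and $\E[\psi_i\mid S=A,\W_{-A}]=\psi_i$. This gives the first sum.

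For $i\in D$, I would condition additionally on $T_i$ via the tower property and use $T_i\in\{0,1\}$:
\begin{align*}
\E\!\left[\frac{T_i}{p_i}Y_i\,\middle|\, S=A,\W_{-A}\right]
&= \frac{1}{p_i}\P(T_i=1\mid S=A,\W_{-A})\,\E[Y_i\mid S=A,\W_{-A},T_i=1]\\
&= \frac{\tilde p_i(\W_{-A})}{p_i}\, m_i(1,\W_{-A}).
\end{align*}
The same argument applied to $(1-T_i)Y_i/(1-p_i)$ gives
\[
\frac{1-\tilde p_i(\W_{-A})}{1-p_i}\, m_i(0,\W_{-A}).
\]
Subtracting the two and summing over $i\in D$ yields the second sum.

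The only delicate point is that $m_i(t,\W_{-A})$ is undefined when $\P(T_i=t\mid S=A,\W_{-A})=0$. In that case the corresponding product should be read as $0$, which is consistent because the factor $\tilde p_i$ or $1-\tilde p_i$ multiplying it vanishes. I would state this convention explicitly. Everything else is routine bookkeeping, with conditioning on $\{S=A\}$ understood on events of positive probability, as in the display preceding the proposition.
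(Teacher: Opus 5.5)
Your proposal is correct and matches the paper's proof essentially step for step: the terms with $i\notin D$ pass through because $T_i$ and $Y_i$ are measurable on $\W_{-A}$, and the terms with $i\in D$ are handled by the law of total expectation over $T_i\in\{0,1\}$. Your added convention for the case $\P(T_i=t\mid S=A,\W_{-A})=0$ is a harmless clarification that the paper leaves implicit.
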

The only terms in Proposition \ref{long} that we must estimate are $m_i(t,\W_{-A}).$ Assume that we have deterministic covariates $X_i$ for each $i\in [n].$ Our key idea is the approximation
\begin{align}
   m_i(t,\W_{-A})&\approx 
   \E[Y_i\mid T_i=t, X_i]\approx \widehat m_{-D}(t,X_i)
    \label{eq:key_approx}
\end{align}
for $i\in D$ where $\widehat{m}_{-D}(t,x)$ is learned by regressing $Y_i$ on $(T_i,X_i)$ only using $i\notin D$. Combining Proposition \ref{long} with \eqref{eq:key_approx}, we choose our proxy $g(A,\W_{-A})$ as
\begin{align}
    g(A,\W_{-A}) = \frac{1}{n}\sum_{i\notin D}\psi_i +  \frac{1}{n}\sum_{i\in D}\left(\frac{\tilde{p}_i(\W_{-A})}{p_i}\cdot\widehat{m}_{-D}(1,X_i) - \frac{1-\tilde{p}_i(\W_{-A})}{1-p_i}\cdot \widehat{m}_{-D}(0,X_i)\right).\label{eq:proxy}
\end{align}
Since $\widehat m_{-D}(t,x)$ is measurable on $\W_{-A}$, we see that $g(A,\W_{-A})$ is indeed a function of $\W_{-A}.$ Moreover, we emphasize that the approximation $\eqref{eq:key_approx}$ is only needed for $i\in D.$ The first approximation in \eqref{eq:key_approx} assumes that given $T_i,$ the additional information in $(S=A,\W_{-A})$ contributes little to the conditional mean of $Y_i$. In particular, this approximation is exact when $Y_i$ is well-specified by $T_i$. The second approximation in \eqref{eq:key_approx} assumes that the conditional mean of $Y_i$ given $T_i$ can be learned from the non-deleted units and transferred to units in $D$ via covariates. Both approximations are especially plausible under local interference.

When covariates are not used, we can choose IPW estimates
\begin{align*}
    \widehat m_{-D}(1) &= \frac{S_T}{n-|D|} = \frac{1}{n-|D|}\sum_{i\notin D} \frac{T_i}{p_i}\cdot Y_i,\\
    \widehat m_{-D}(0) &=  \frac{S_C}{n-|D|}=\frac{1}{n-|D|}\sum_{i\notin D} \frac{1-T_i}{1-p_i}\cdot Y_i.
\end{align*}
Moreover, the recomputed average of the $\psi_i$'s after dropping units in $D$ can be written as
\begin{align*}
    \frac{1}{n-|D|}(S_T-S_C)
    &= \frac{1}{n}(S_T-S_C) + \frac{|D|}{n}\left(\frac{S_T}{n-|D|} - \frac{S_C}{n-|D|}\right)\\
    &= \frac{1}{n}\sum_{i\notin D}\psi_i + \frac{1}{n}\sum_{i\in D}\left(\widehat m_{-D}(1) - \widehat m_{-D}(0)\right),
\end{align*}
analogous to \eqref{eq:proxy} above. 

Thus, the recomputed average of the $\psi_i$'s after deleting $D$ is a natural baseline proxy. In Section \ref{NW}, choosing $g(S,\W_{-S})$ as the recomputed average resulted in a scaled Newey--West estimator. More generally, however, we see from \eqref{eq:proxy} that there are many other reasonable choices for $g(S,\W_{-S})$, especially when covariates are available. Moreover, we see that larger $S$ results in a larger deletion set $D$ and makes the learning problem for $\widehat m(t,x)$ more difficult. In particular, the available data in $[n]\setminus D$ will be smaller, while the number of values that need to be imputed in $D$ will be larger.

\section{Robustness under Misspecification}\label{robustness}
All results using Theorem \ref{Main} assume that the computable proxy $f^{-S}$ is strictly measurable with respect to $(S,\W_{-S}).$ The following result shows that our framework still gives asymptotic conservativeness as long as $f^{-S}$ is almost measurable on $(S,\W_{-S}).$  

\begin{lemma}\label{robust}
    Let $\W\sim\pi$ be any random vector in $\mathcal{X}^m$ where $\mathcal{X}$ is finite. Let $f:\mathcal{X}^m\to\R$ be measurable such that $f(\W)$  has finite variance. Finally, let $f^{-S}$ be any square-integrable function of $(S,\W).$ If 
    $$\E\left[(f^{-S} - \E[f^{-S}\mid S,\W_{-S}])^2\right] = o(\lambda \Var(f(\W)))$$
    as $m\to\infty,$ then we have
    \begin{align*}
        \frac{1}{\lambda}\E[\left(f - f^{-S}\right)^2]\geq (1 - o(1))\Var(f(\W)).
    \end{align*}
\end{lemma}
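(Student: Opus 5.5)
Write $h=\E[f(\W)\mid S,\W_{-S}]$, $\tilde g=\E[f^{-S}\mid S,\W_{-S}]$ and $e=f^{-S}-\tilde g$. By hypothesis $\E[e^2]=\epsilon_m\,\lambda\Var(f(\W))$ with $\epsilon_m\to 0$. The plan is to compare $\E[(f-f^{-S})^2]$ with the exactly measurable case $\E[(f-\tilde g)^2]$. For the latter, Theorem \ref{Main} (equivalently, the orthogonal decomposition \eqref{eq:decomp} together with Lemma \ref{Poincare_mod}) gives
\begin{align*}
\E[(f-\tilde g)^2]\;\geq\;\E[(f-h)^2]\;\geq\;\lambda\Var(f(\W)).
\end{align*}
This uses that $\tilde g$ is a function of $(S,\W_{-S})$ and is square-integrable, by conditional Jensen.

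Next, I expand
\begin{align*}
\E[(f-f^{-S})^2]=\E[(f-\tilde g)^2]-2\E[(f-\tilde g)e]+\E[e^2].
\end{align*}
I drop the nonnegative last term and bound the cross term by Cauchy--Schwarz. Setting $a=\E[(f-\tilde g)^2]^{1/2}$ and $b=\E[e^2]^{1/2}$, this gives
\begin{align*}
\E[(f-f^{-S})^2]\geq a^2-2ab.
\end{align*}
An alternative is the cleaner route $\|f-f^{-S}\|_{L^2}\ge a-b$ from the triangle inequality, which gives $(a-b)^2$ whenever $a\ge b$. I would use this version.

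With $a\geq\sqrt{\lambda\Var(f)}$ and $b=\sqrt{\epsilon_m\lambda\Var(f)}=o(a)$, we get $a-b\geq a(1-\sqrt{\epsilon_m})$ for large $m$. Hence
\begin{align*}
\E[(f-f^{-S})^2]\geq (1-\sqrt{\epsilon_m})^2 a^2\geq (1-o(1))\lambda\Var(f(\W)).
\end{align*}
Dividing by $\lambda$ gives the claim.

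The only delicate point is that the cross term $\E[(f-\tilde g)e]$ need not vanish, because $f$ is not measurable on $(S,\W_{-S})$. Note that $e$ is orthogonal to functions of $(S,\W_{-S})$, but $f-h$ need not be orthogonal to $e$. This is why I would use the triangle inequality rather than an exact orthogonal decomposition. The argument needs the lower bound $a^2\ge\lambda\Var(f)$ to absorb $b$, and the case $\Var(f(\W))=0$ is trivial.
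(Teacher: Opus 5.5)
Your proposal is correct and follows essentially the same route as the paper: apply Theorem \ref{Main} to the projected proxy $\E[f^{-S}\mid S,\W_{-S}]$ to get $\|f-\E[f^{-S}\mid S,\W_{-S}]\|^2\ge\lambda\Var(f(\W))$, then use the reverse triangle inequality in $L^2(\sigma(S,\W))$ and square. Your explicit $\epsilon_m$ bookkeeping, including checking $a-b\ge 0$ for large $m$ before squaring, simply makes the paper's $o(\cdot)$ step precise.
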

\begin{proof}
    We know that $\langle f_1,f_2\rangle = \E[f_1f_2]$ is an inner product for $L^2(\sigma(S,\W))$. By the reverse triangle inequality, we have
    \begin{align*}
        \norm{f - f^{-S}}\geq \norm{f - \E[f^{-S}\mid S,\W_{-S}]} - \norm{f^{-S} - \E[f^{-S}\mid S,\W_{-S}]}.
    \end{align*}
    By Theorem \ref{Main}, we know that
    \begin{align*}
        \norm{f - \E[f^{-S}\mid S,\W_{-S}]}^2\geq \lambda \Var(f).
    \end{align*}
    Thus, we get
    \begin{align*}
        \norm{f - f^{-S}}&\geq \sqrt{\lambda \Var(f)} - o(\sqrt{\lambda\Var(f)})\\
        &= (1-o(1))(\sqrt{\lambda\Var(f)}).
    \end{align*}
    Squaring both sides gives the desired result.
\end{proof}

The following identity is useful for interpreting the condition of Lemma \ref{robust}.
\begin{lemma}\label{robust_identity}
Let $f^{-S}=g(S,\W)$ and $(S,\W,\W')$ be according to Steps \ref{step_one} and \ref{step_two}. Then, we have
\begin{align*}
     \E\left[(f^{-S} - \E[f^{-S}\mid S,\W_{-S}])^2\right] = \frac{1}{2}\E\left[(g(S,\W) - g(S,\W'))^2\right].
\end{align*}
\end{lemma}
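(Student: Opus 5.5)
The plan is to mirror the proof of Lemma \ref{Poincare_mod}, replacing $f$ by $g(S,\cdot)$, and to expand the right-hand side around the conditional mean. Write $h(S,\W_{-S}) = \E[g(S,\W)\mid S,\W_{-S}]$. Since $\W'=(\W'_S,\W_{-S})$ shares $S$ and $\W_{-S}$ with $\W$, we have $\W'_{-S}=\W_{-S}$. Hence $h$ is also the natural centering for $g(S,\W')$.

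We then decompose
\begin{align*}
g(S,\W)-g(S,\W') = \big(g(S,\W)-h(S,\W_{-S})\big) - \big(g(S,\W')-h(S,\W_{-S})\big),
\end{align*}
square, and take expectations. This gives two squared terms and a cross term.

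For the squared terms, we invoke Lemma \ref{ed}: $(S,\W,\W')\ed(S,\W',\W)$. Because $h(S,\W_{-S})$ is a function of $(S,\W_{-S})$, which is symmetric under the swap, the two squared terms have equal expectations. Each equals $\E[(f^{-S}-\E[f^{-S}\mid S,\W_{-S}])^2]$.

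For the cross term, we condition on $(S,\W_{-S})$. By Step \ref{step_two}, $\W_S$ and $\W'_S$ are conditionally i.i.d. given $(S,\W_{-S})$. The two centered factors are therefore conditionally independent, and each has conditional mean zero. The second factor has mean zero because $\W'$ has the same conditional law as $\W$ given $(S,\W_{-S})$, so $\E[g(S,\W')\mid S,\W_{-S}]=h$. The cross term thus has zero expectation.

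Combining the pieces, the expectation of the squared difference equals twice the conditional variance term, which is the claimed identity after dividing by $2$.

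The only delicate point is justifying that conditioning on $S$ is legitimate even though $S$ may depend on $\W$. The key fact is that the conditional law of $\W'_S$ given $(S,\W_{-S})$ matches that of $\W_S$. This is exactly what Step \ref{step_two} specifies, and it was already used in the proof of Lemma \ref{Poincare_mod}. Finite $\mathcal{X}$ and square integrability make all the expectations finite, so no further obstacle is expected.
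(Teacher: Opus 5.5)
Your proposal is correct and follows essentially the same argument as the paper. Both center at $h(S,\W_{-S})=\E[g(S,\W)\mid S,\W_{-S}]$, show the cross term vanishes because $\W_S$ and $\W'_S$ are conditionally i.i.d.\ given $(S,\W_{-S})$, and equate the two squared terms via Lemma~\ref{ed} together with $\W'_{-S}=\W_{-S}$.
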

The right-hand side is nice as we know that $\W_{-S} = \W'_{-S}$ by construction.

\subsection{Time Series with Decaying Dependence}\label{time_series} Here, we give an application of Lemma \ref{robust} for time series under the Bernoulli design. Say we have $T$ many units labeled $1,\dots, T$ and let each $t\in [T]$ independently receive treatment $W_t\in\{0,1\}$ with probability $\pi\in(0,1).$ Further assume 
\begin{align*}
    Y_t(\W)=Y_t(W_1,\dots, W_t)
\end{align*}
for each $t\in [T]$ and consider the IPW estimator 
\begin{align*}
    \hat\tau = \frac{1}{T}\sum_{t=1}^T\psi_t= \frac{1}{T}\sum_{t=1}^T \left(\frac{W_t}{\pi}-\frac{1-W_t}{1-\pi}\right)Y_t
\end{align*}
for the average direct effect. 

Let $\{B_1,\dots, B_k\}$ denote the partition of $[T]$ into consecutive blocks of length $\ell$ (we assume $T=k\cdot \ell$). Let $S=B_s$ where $s\sim \text{Unif}\{1,2,\dots k\},$ independently of $\W.$ This gives $\lambda = 1/k = \ell/T$ by Lemma \ref{Spectral}. Next, for any $j<k,$ let $D_j\supseteq B_j$ denote the block that adds $r$ units to the right of $B_j$. For sake of simplicity, we assume that $0\leq r\leq \ell.$ Then, for any $j\in [k],$ we define the proxy 
\begin{align*}
    f^{-B_j}=\frac{1}{T-|D_j|}\sum_{t\notin D_j}\psi_t
\end{align*}
where $D_k=B_k.$ Note that $f^{-B_j}$ is not necessarily measurable on $\W_{-B_j},$ since all units to the right of $B_j$ may depend on $\W_{B_j}.$ However, as shown below, we still obtain asymptotic conservativeness as long as potential outcomes weakly depend on treatments from the distant past and $r$ is large enough. The proof is given in the \hyperref[app]{Appendix}. Let $w_{t_1:t_2}=(w_t)_{t=t_1}^{t_2}.$
\begin{proposition}\label{time}
    For each $d\geq 0,$ let $\Delta_{T,d}$ be a deterministic quantity satisfying 
    \begin{align*}
        |Y_t(w)-Y_t(w')|\leq \Delta_{T,d}
    \end{align*}
    for any $w,w'\in\{0,1\}^T$ and $d\leq t\leq T$ such that $w_{t-d+1:t} = w'_{t-d+1:t}.$ Choose $S$ and $f^{-S}$ as described above. If $\ell,r = o(T)$, $\Var(\hat\tau) \asymp T^{-1},$ and 
    \begin{align}
        \sum_{d\geq r+1}\Delta_{T,d}^2 = o\left(\frac{\ell}{T}\right),\label{eq:asymptotics}
    \end{align}
    then we have 
    \begin{align*}
        \frac{1}{\lambda}\E[\left(\hat\tau - f^{-S}\right)^2]\geq (1 - o(1))\Var(\hat\tau).
    \end{align*}
\end{proposition}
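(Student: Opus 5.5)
The plan is to apply Lemma \ref{robust} with $\lambda=\ell/T$. Since $\Var(\hat\tau)\asymp T^{-1}$, it suffices to show
\begin{align*}
\E\left[(f^{-S}-\E[f^{-S}\mid S,\W_{-S}])^2\right]=o\left(\frac{\ell}{T}\cdot\frac{1}{T}\right)=o\left(\frac{\ell}{T^2}\right).
\end{align*}
By Lemma \ref{robust_identity}, the left-hand side equals $\frac12\E[(g(S,\W)-g(S,\W'))^2]$, where $\W'$ re-randomizes $\W_{B_s}$ independently under the Bernoulli design. So I will condition on $S=B_j$, average over $j$ with weight $1/k$, and bound the change in $f^{-B_j}$ when only the coordinates in $B_j$ are resampled.

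Fix $j<k$; the case $j=k$ contributes zero, since every $t\notin B_k$ lies before $B_k$ and so does not depend on $\W_{B_k}$. Units $t<\min B_j$ depend only on past treatments, so $\psi_t$ is unchanged. Units in $D_j$ are deleted. For $t>\max D_j$, $W_t$ is unchanged and $|\psi_t(\W)-\psi_t(\W')|\leq c_\pi|Y_t(\W)-Y_t(\W')|$ with $c_\pi=\max(1/\pi,1/(1-\pi))$. Since $\W$ and $\W'$ agree on all coordinates after $B_j$, they agree on a window of length $d_t:=t-\max B_j\geq r+1$ ending at $t$, so $|Y_t(\W)-Y_t(\W')|\leq \Delta_{T,d_t}$. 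Distinct $t$ give distinct $d_t$. Hence
\begin{align*}
|g(B_j,\W)-g(B_j,\W')|\leq \frac{c_\pi}{T-|D_j|}\sum_{d\geq r+1}\Delta_{T,d}.
\end{align*}

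This is where the main difficulty lies: squaring an $\ell^1$ sum against a hypothesis stated in $\ell^2$. Cauchy--Schwarz over up to $T$ terms gives only $T\sum_{d\geq r+1}\Delta_{T,d}^2/T^2=o(\ell/T^2)$. That happens to be exactly the required rate, since $T-|D_j|\geq T-2\ell\sim T$ because $\ell,r=o(T)$, so I plan to use it. The resulting per-$j$ bound is uniform in $j$, so averaging over $j$ gives
\begin{align*}
\E\left[(f^{-S}-\E[f^{-S}\mid S,\W_{-S}])^2\right]\lesssim \frac{c_\pi^2}{T}\sum_{d\geq r+1}\Delta_{T,d}^2=o\left(\frac{\ell}{T^2}\right)=o(\lambda\Var(\hat\tau)).
\end{align*}
Lemma \ref{robust} then yields the claim.

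If the normalization by $1/(T-|D_j|)$ versus $1/T$ causes any issue, I would note that it only affects constants. Also, $f^{-S}$ is square-integrable trivially because the state space is finite, which covers the remaining hypothesis of Lemma \ref{robust}.
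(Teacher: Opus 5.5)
Your proof is correct and follows the paper's argument. Both combine Lemma \ref{robust} with Lemma \ref{robust_identity}, note that resampling $\W_{B_j}$ changes only the terms $\psi_t$ strictly to the right of $D_j$, and control those changes with Cauchy--Schwarz and the $\Delta_{T,d}$ bound at lags $d\ge r+1$, which gives $O\bigl(T^{-1}\sum_{d\ge r+1}\Delta_{T,d}^2\bigr)=o(\ell/T^2)=o(\lambda\Var(\hat\tau))$. The only difference is cosmetic: you bound the absolute difference termwise before applying Cauchy--Schwarz, while the paper applies Cauchy--Schwarz to the signed sum directly.
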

Our assumption on $\Delta_{T,d}$ can be thought of as a deterministic analogue of the geometric mixing setting assumed in \cite{HuWager2022}. In particular, we see that the buffer size $r$ needed for asymptotic conservativeness depends on the decay rate of $\Delta_{T,d}.$ If $\Delta_{T,d}\lesssim 1/d^{\alpha}$ for some $\alpha>1/2,$ condition \eqref{eq:asymptotics} is satisfied if $r^{2\alpha-1}\gg T/\ell.$ By contrast, if $\Delta_{T,d}\lesssim \rho^{d}$ for $\rho\in (0,1),$ then we see that $r\gg \log(T/\ell)$ suffices.  

\section{Experiments}
In this section, we evaluate the Neyman Jackknife estimator in two different settings. We work under the Bernoulli design with update sets chosen as blocks of size $L$ on the cycle. We compare the performance of the Neyman Jackknife against recent variance estimators in the literature. With an appropriate choice of $L,$ we see that the Neyman Jackknife is competitive against existing variance estimators. We also see that performance can further improve when proxies are constructed using predictive covariates. In Table \ref{tab:cycle-var-summary} and Figure \ref{fig:heat}, we note that the reported values are optimized over candidate choices of $L$, hence representing oracle benchmark performance.

\subsection{Cycle Graph with Binary Exposures}\label{cycle_example}
We begin with an example where the outcome of unit $i$ depends on whether both of its neighbors on the cycle graph are treated. For each unit $i\in [n]$ on the cycle graph, let $W_i\overset{\mathrm{i.i.d.}}{\sim}\text{Ber}(\pi=0.5)$ denote its treatment and let $X_i\overset{\mathrm{i.i.d.}}{\sim} N(0,1)$ denote its covariate. Further define the exposure $T_i = W_{i-1}W_{i+1}$ with indices wrapping around the cycle. Then, we assume that
\begin{align*}
    Y_i&= b_i + \tau_i \cdot T_i + \varepsilon_i
\end{align*}
where $b_i$ is the baseline, $\tau_i$ is the treatment effect, and $\varepsilon_i$ is noise. We model $\varepsilon_i  \overset{\mathrm{i.i.d.}}{\sim} 0.3 \cdot N(0,1)$ along with $b_i = 0.5 + \cos(X_i)$ and $\tau_i = 1 + X_i$. After conditioning on the $X_i$'s and $\varepsilon_i$'s, the remaining randomness lies solely in the treatments, and $Y_i$ is determined by whether $T_i = 1$ or $T_i = 0.$ Under this design-based setting, we consider the IPW estimator 
\begin{align*}
    \hat\tau = \frac{1}{n}\sum_{i=1}^n \left(\frac{T_i}{\pi^2} - \frac{1-T_i}{1-\pi^2}\right)Y_i
\end{align*}
and wish to estimate $\Var(\hat\tau).$ As $\hat\tau$ is a linear estimator and each unit has only two exposures, we choose the trace-optimal quadratic variance estimator by \cite{harshaw2026variance} as the baseline. For our method, we use blocks of size $L$ on the cycle graph as the update set $S$, independent of $\W$. Moreover, we consider two different proxies $g(S,\W_{-S})$. The first proxy is obtained by deleting units in $D$ and then recomputing the average. Our second proxy involves covariates, where we use linear regression to obtain $\widehat{m}_{-D}(t,x)$ as described in \eqref{eq:proxy}. 

We report our results in Table \ref{tab:cycle-var-summary} and Figure \ref{fig:cycle}, where the variance estimates are averaged over 5000 Monte Carlo iterations of treatments. First, as $n$ increases, we see that the conservativeness ratio for the HMS estimator stays around 1.06, while the ratio improves for both Neyman Jackknife estimators. This can be interpreted as larger $n$ resulting in a more accurate proxy, hence reducing the approximation error. Next, from the $n=100$ case we see the tradeoff between the size of $S$ and the proxy $f^{-S}$ resulting in a $U$-shaped curve for the Neyman Jackknife estimator with the recomputed average proxy. From Table \ref{tab:cycle-var-summary}, we see that the optimal $L$ for the covariate-based Neyman Jackknife estimator was also at $L=21,$ not 30. For $n = 500$ and $n=1000,$ the globally optimal $L$ seems to not have been reached at $L = 30.$ To conclude, we see that the covariate-based proxy greatly reduced the approximation error from the recomputed average proxy, resulting in a Neyman Jackknife estimator that is competitive with the estimator given in \citep{harshaw2026variance}. 

\subsection{Switchback Experiments with Burn-ins}

\begin{figure}[t]
    \centering
    \includegraphics[width=0.7\textwidth]{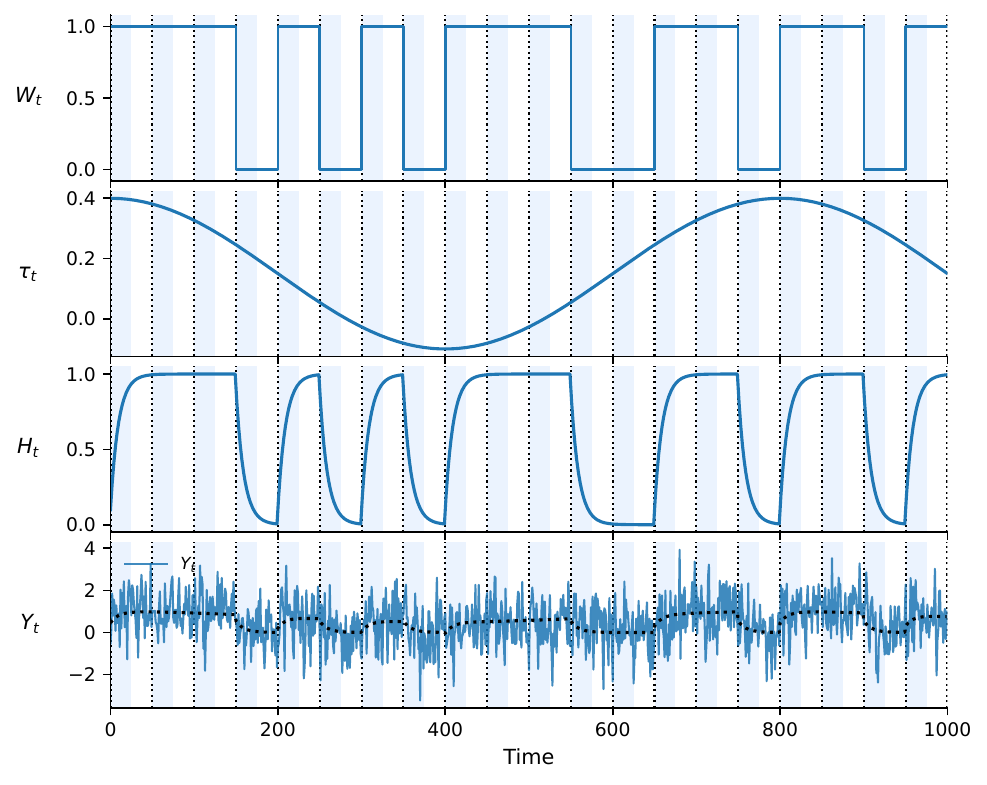}
    \caption{An illustration of the DGP for the switchback example with $T = 10000$ and $\ell = 50$. We plot time steps up to $t = 1000.$ Each dotted vertical line marks the beginning of a new block. The dotted line in the plot for $Y_t$ traces $\tau_t\cdot W_t + \rho\cdot H_t$. The shaded regions represent the burn-in parts with $b=25.$}
    \label{fig:DGP}
\end{figure}

Here, we consider a switchback experiment in the time series setting of Section \ref{time_series}. Say we have $T$ many time-units indexed by $[T]$ and let $W_t\in\{0,1\}$ denote the treatment of unit $t.$ We model 
\begin{align*}
    Y_t = \tau_t \cdot W_t + \rho \cdot H_t + \varepsilon_t
\end{align*}
where $\tau_t$ deterministically varies over time and $H_t$ is a hidden carryover state given by 
\begin{align*}
    H_t = 
    \begin{cases}
        H_{t-1} + \alpha\cdot (1 - H_{t-1}) & W_t = 1,\\
        (1-\alpha)\cdot H_{t-1} & W_t = 0.
    \end{cases}
\end{align*}
We choose $\tau_t = 0.15 + 0.25\cdot \cos(2\pi t / 800)$ along with $\alpha = 0.1$ and $\rho = 0.6.$ Finally, we assume $\varepsilon_t  \overset{\mathrm{i.i.d.}}{\sim} N(0,1).$ For the design, we partition $[T]$ into $k$ blocks of equal length $\ell$ (we assume $T=k\cdot \ell$). For each $t$ in the $i$th block, we set $W_t = Z_i$ where $Z_i\overset{\mathrm{i.i.d.}}\sim \text{Ber}(\pi=0.5)$ for $i\in [k].$ A sample realization of our DGP is illustrated in Figure \ref{fig:DGP}. 

Since there is carryover in $H_t$ as illustrated in Figure \ref{fig:DGP}, we divide each block into a burn-in part and a focal part. Letting $b$ denote the burn-in length, we let the first b units in each block form the burn-in part, while the latter $\ell-b$ units form the focal part. For each $i\geq 2,$ we use outcomes in the $i$th burn-in only when $Z_{i-1}=Z_i$. For each $i\geq 1,$ we always use the focal part. For each $i\in [k],$ define
\begin{align*}
    B_i &= \frac{2}{\ell}\sum_{t\in\text{$i$th burn-in}}Y_t,\\
    F_i &= \frac{2}{\ell}\sum_{t\in\text{$i$th focal}}Y_t
\end{align*}
so that 
\begin{align*}
    \tau=\frac{1}{2k}\sum_{i=1}^k \big [B_i(\mathbf{Z} = \mathbf{1})+F_i(\mathbf{Z} = \mathbf{1}) - B_i(\mathbf{Z} = \mathbf{0})- F_i(\mathbf{Z} = \mathbf{0})\big ] = \frac{1}{T}\sum_{t=1}^T \left(Y_t(\W=\mathbf{1}) - Y_t(\W=\mathbf{0})\right).
\end{align*}
While $B_i$ and $F_i$ each depend on $Z_1,\dots, Z_i$ due to the hidden state $H_t,$ when $b,\ell$ are sufficiently large, we can approximate $B_i$ as a function of $Z_i,Z_{i-1}$ (for $i\geq 2$) and $F_i$ as a function of $Z_i.$ This gives the bipartite setting of \cite{lu2025design}, where we have $k$ intervention units $Z_1,\dots, Z_k$ and $2k$ outcome units $(B_i,F_i)_{i=1}^k.$ In other words, we have a bipartite graph on $([k], [2k])$ where outcome unit $B_i$ is connected to intervention unit $Z_1$ if $i=1$ and $(Z_i,Z_{i-1})$ for $i\geq 2$. Moreover, outcome unit $F_i$ is connected to intervention unit $Z_i$ for $i\geq 1.$

\begin{figure}[p]
\centering

\captionsetup{type=table}
\renewcommand{\arraystretch}{1.15}
\setlength{\tabcolsep}{10pt}
\caption{Cycle example: True variance and estimated variances. HMS corresponds to \cite{harshaw2026variance}. For NJ-avg and NJ-cov, we report the estimate at the selected block size $L$, which is the best among $L\leq 30$.}
\begin{tabular}{rcccc}
\toprule
$n$ & Actual var. & HMS & NJ-avg & NJ-cov \\
\midrule
100 & 0.389329 & 0.408951 & 0.505865 $(L=10)$ & 0.407027 $(L=21)$ \\
500 & 0.068410 & 0.072472 & 0.079602 $(L=21)$ & 0.070359 $(L=30)$ \\
1000 & 0.034130 & 0.035979 & 0.037897 $(L=30)$ & 0.034721 $(L=30)$ \\
\bottomrule
\end{tabular}
\label{tab:cycle-var-summary}

\vspace{1em}

\captionsetup{type=figure}
\includegraphics[width=0.9\textwidth]{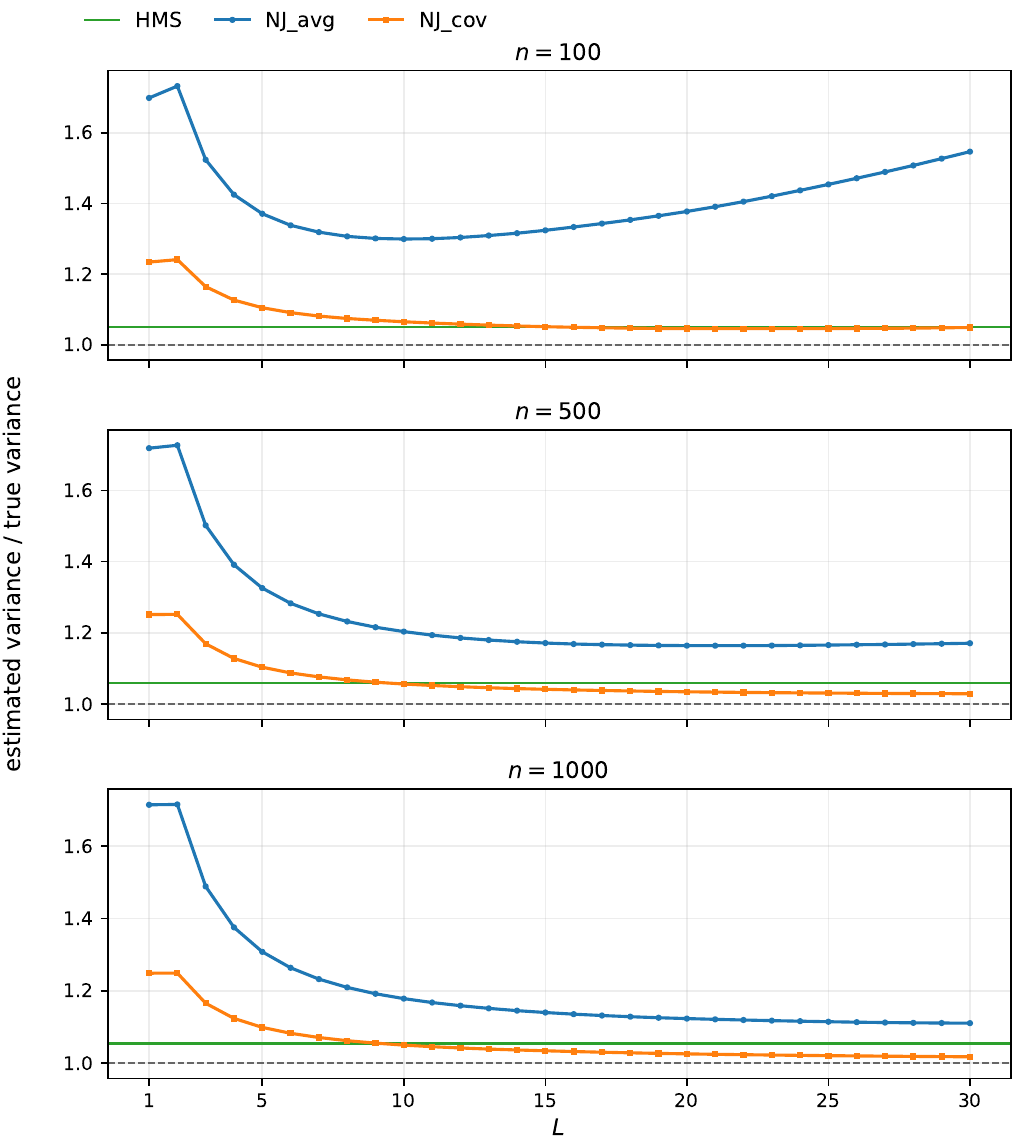}
\caption{Cycle example: The ratio of variance estimates to the true variance for all $L\in\{1,\dots, 30\}.$}
\label{fig:cycle}
\end{figure}

Order the outcome units as $(\tilde{Y}_1,\dots, \tilde{Y}_{2k})=(B_1,F_1,\dots, B_k,F_k).$ To estimate $\tau$, \cite{lu2025design} consider the H\'{a}jek estimator 
\begin{align*}
    \hat\tau = \frac{\sum_{j=1}^{2k}\frac{T_j}{p_j}\tilde{Y}_j}{\sum_{j=1}^{2k} \frac{T_j}{p_j}} -\frac{\sum_{j=1}^{2k}\frac{C_j}{q_j}\tilde{Y}_j}{\sum_{j=1}^{2k} \frac{C_j}{q_j}}.
\end{align*}
For each $j\in [2k],$ we have $T_j=1$ if all intervention units connected to outcome unit $j$ are treated. Otherwise, we have $T_j=0$. Similarly, we have $C_j = 1$ if all intervention units connected to outcome unit $j$ are not treated. Otherwise, we have $C_j=0$. We write $p_j = \P(T_j=1)$ while $q_j= \P(C_j=1).$ Let $\widehat{V}_{\text{Bipartite}}$ denote the asymptotically conservative variance estimator for $\hat\tau$ given in \cite{lu2025design}. We use this as a baseline for our Neyman Jackknife variance estimator $\widehat{V}.$

To construct $\widehat V$, we choose $S$ as a uniformly random block of size $L$ on the cycle of size $k$, independent of $\Z.$ Next, given $S=A\subseteq [k],$ let $D\subseteq [2k]$ denote the set of outcome units that have a neighboring intervention unit in $A.$ Then, we choose our computable proxy as 
\begin{align*}
    g_{A} = \frac{\sum_{j\notin D}\frac{T_j}{p_j}\tilde{Y}_j}{\sum_{j\notin D}\frac{T_j}{p_j}} - \frac{\sum_{j\notin D}\frac{C_j}{q_j}\tilde{Y}_j}{\sum_{j\notin D}\frac{C_j}{q_j}}.
\end{align*}

Our experiment results are given in Figures \ref{fig:heat} and \ref{fig:bipartite_line}. Fixing a single realization of the $\varepsilon_t$'s, we obtain variance estimates by averaging over 1000 Monte Carlo iterations of treatments. Moreover, we fix $T=10000$ and consider $\ell\in \{40, 50, 80, 100, 125, 200\}$ along with $b\in \{5, 10, 15, 20, 25, 30, 40, 50\}.$ From the first plot of Figure \ref{fig:heat}, we see that the relative performance of $\widehat{V}$ compared to $\widehat{V}_{\text{Bipartite}}$ worsens as we increase $\ell.$ This can be understood due to our proxy's approximation error increasing, as the effective sample size $k$ is smaller for larger $\ell$. However, when we increase the burn-in length $b,$ the relative performance of $\widehat{V}$ to $\widehat{V}_{\text{Bipartite}}$ improves, possibly due to the fact that the bipartite approximation of the DGP becomes more accurate. From Figure \ref{fig:DGP}, we see that a burn-in length of roughly 25 or larger seems to admit a bipartite approximation. Finally, in Figure \ref{fig:bipartite_line}, we see that $\widehat{V}$ is comparable to $\widehat{V}_{\text{Bipartite}}$ and can be tighter for appropriate choices of $L.$ As illustrated in Section \ref{cycle_example}, we suspect that a covariate-based proxy can further improve the performance of $\widehat{V}$ in this setting as well.

\begin{figure}[p]
\centering
\begin{minipage}{\textwidth}
    \centering
    \includegraphics[width=\textwidth]{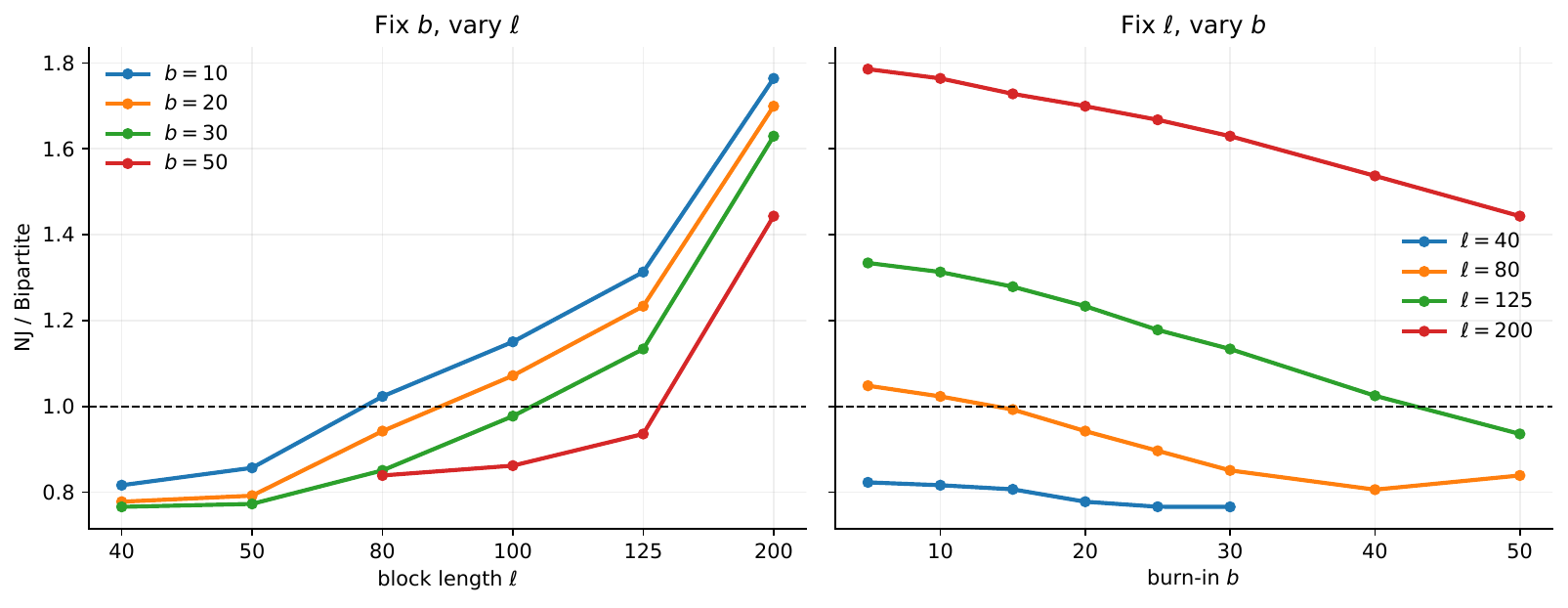}
    \captionof{figure}{Switchback example: We plot $\widehat{V}$ over $\widehat{V}_{\text{Bipartite}}$ across various $(\ell,b),$ where each $\widehat{V}$ is according to the best $L$ among $L\leq 20.$ All variance estimates were conservative in this simulation, hence a ratio below 1 indicates when $\widehat{V}$ is tighter.}
    \label{fig:heat}
    \vspace{1em}
    \includegraphics[width=0.75\textwidth]{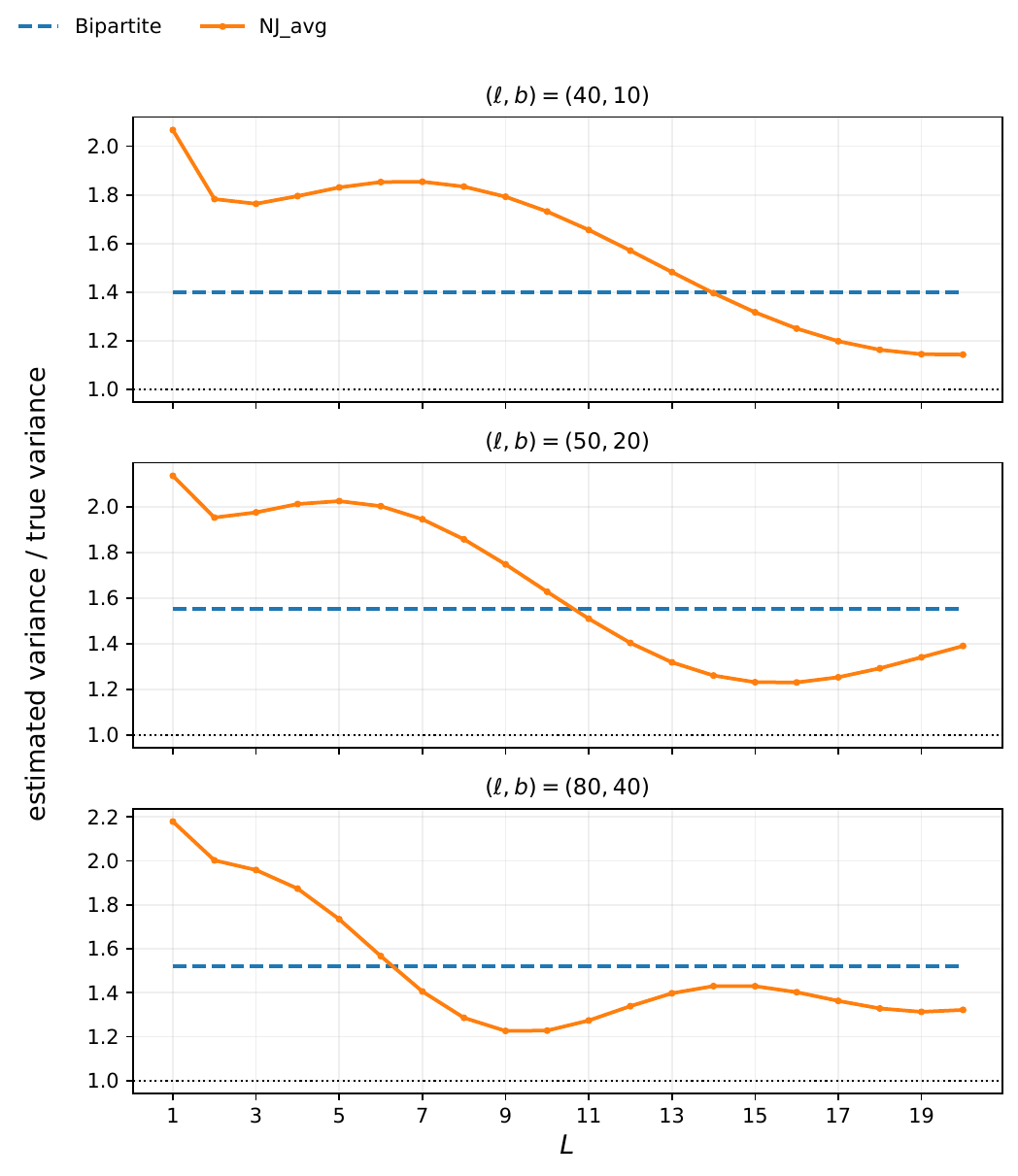}
    \captionof{figure}{Switchback example: The ratio of variance estimates to the true variance for all $L\in\{1,\dots, 20\}.$}
    \label{fig:bipartite_line}
\end{minipage}

\end{figure}

\section{Discussion}
In this paper, we introduced the Neyman Jackknife, a framework for design-based variance estimation under interference. Given a design $\W\sim \pi,$ we obtain the variance estimator $\widehat{V}$ by choosing an index-sampling rule $\mu$ and a computable proxy $f^{-S}.$ We discuss three directions for future work. 

The first regards approximating the spectral gap $\lambda$ for more complex designs. It may be interesting to explore whether ideas from the spectral independence literature \citep{anari2024spectral}, which studies mixing properties of resampling dynamics for dependent measures, can be adapted to this problem. A more computational approach may also be possible: given a design, one could estimate $\lambda$ numerically by optimizing the Rayleigh quotient over a parameterized class of functions, using Monte Carlo samples from the design together with one-step transitions. While this would not in general give an exact value, it could still provide a useful approximation when an explicit spectral calculation is difficult.

The second direction involves choosing the optimal $L$ when $S$ is given as a uniformly random block of size $L$ on the cycle. Technically, conservativeness is not guaranteed if one selects $L$ after comparing many candidates, since our result is stated in expectation for each fixed $L$. However, if we further have
$$
\frac{\hat V_L}{\Var(\hat\tau)} \xrightarrow{p} c_L \qquad \text{with } c_L \ge 1,
$$
then one may choose the value of $L$ giving the smallest $\hat V_L$ while maintaining asymptotic conservativeness. It would be interesting to see when we have such conservativeness in probability.

Finally, our examples have focused on treatment-effect estimators under interference. In particular, we considered estimators that are linear in the outcomes. It would be nice to examine the performance of our framework for more general estimators, such as an AIPW estimator with nonlinear regression methods. Moreover, our framework as described in Section \ref{general} is not inherently causal. This suggests possible applications beyond causal inference, for instance in statistical settings where the randomness is structured or combinatorial.

\bibliographystyle{plainnat} 
\bibliography{references}

@article{efron1981jackknife,
  title={The jackknife estimate of variance},
  author={Efron, B. and Stein, C.},
  journal={Ann. Statist.},
  pages={586--596},
  year={1981},
  publisher={JSTOR}
}

@article{miller1968jackknifing,
  title={Jackknifing variances},
  author={Miller, R.},
  journal={Ann. Math. Stat.},
  volume={39},
  number={2},
  pages={567--582},
  year={1968},
  publisher={JSTOR}
}

@book{tukey1986collected3,
  title     = {The Collected Works of John W. Tukey, Volume III:
               Philosophy and Principles of Data Analysis, 1949--1964},
  editor    = {Jones, L.},
  author    = {Tukey, J.},
  year      = {1986},
  publisher = {Chapman and Hall/CRC},
  address   = {Boca Raton, FL},
  isbn      = {978-0-412-74250-7}
}

@article{neyman1923applications,
  title={Sur les applications de la th{\'e}orie des probabilit{\'e}s aux experiences agricoles: Essai des principes},
  author={Neyman, J.},
  journal={Roczniki Nauk Rolniczych},
  volume={10},
  pages={1--51},
  year={1923}
}

@article{imbens2004nonparametric,
  title={Nonparametric estimation of average treatment effects under exogeneity: A review},
  author={Imbens, G.},
  journal={Rev. Econ. Stat.},
  volume={86},
  number={1},
  pages={4--29},
  year={2004},
  publisher={MIT Press}
}

@misc{wager2026causal,
  author = {Wager, S.},
  title = {Causal Inference: A Statistical Learning Approach},
  year = {2026},
  howpublished = {\url{https://web.stanford.edu/~swager/causal_inf_book.pdf}}
}

@article{pollmann2020causal,
  title={Causal inference for spatial treatments},
  author={Pollmann, M.},
  archivePrefix = {arXiv},
  note          = {Preprint},
  year={2020}
}

@article{lu2025design,
  title={Design-based causal inference in bipartite experiments},
  author={Lu, S. and Shi, L. and Fang, Y. and Zhang, W. and Ding, P.},
  archivePrefix = {arXiv},
  note          = {Preprint},
  year={2025}
}

@article{Aronow,
  author  = {Aronow, P. and Samii, C.},
  title   = {Estimating average causal effects under general interference, with application to a social network experiment},
  journal = {Ann. Appl. Stat.},
  volume  = {11},
  number  = {4},
  pages   = {1912--1947},
  year    = {2017}
}

@article{Hudgens,
  author  = {Hudgens, M. and Halloran, M.},
  title   = {Toward causal inference with interference},
  journal = {J. Amer. Statist. Assoc.},
  volume  = {103},
  number  = {482},
  pages   = {832--842},
  year    = {2008}
}

@article{Manski,
  author  = {Manski, C.},
  title   = {Identification of treatment response with social interactions},
  journal = {Econom. J.},
  volume  = {16},
  number  = {1},
  pages   = {S1--S23},
  year    = {2013}
}

@misc{Lu,
  author        = {Lu, X. and Li, H. and Liu, H.},
  title         = {Estimation and inference of average treatment effects under heterogeneous additive treatment effect model},
  year          = {2024},
  eprint        = {2408.17205},
  archivePrefix = {arXiv},
  primaryClass  = {stat.ME},
  note          = {Preprint}
}

@article{Savje2,
  author  = {S{\"a}vje, F.},
  title   = {Causal inference with misspecified exposure mappings: separating definitions and assumptions},
  journal = {Biometrika},
  volume  = {111},
  number  = {1},
  pages   = {1--15},
  year    = {2024}
}

@article{gao2025network,
  author  = {Gao, M. and Ding, P.},
  title   = {Causal inference in network experiments: Regression-based analysis and design-based properties},
  journal = {J. Econometrics},
  volume  = {252},
  number  = {Part A},
  pages   = {106119},
  year    = {2025}
}

@article{leung2022ani,
  author  = {Leung, M.},
  title   = {Causal inference under approximate neighborhood interference},
  journal = {Econometrica},
  volume  = {90},
  number  = {1},
  pages   = {267--293},
  year    = {2022}
}

@article{filmus2016basis,
  author  = {Filmus, Y.},
  title   = {An orthogonal basis for functions over a slice of the boolean hypercube},
  journal = {Electron. J. Comb.},
  volume  = {23},
  number  = {1},
  year    = {2016}
}

@article{harshaw2026variance,
  author  = {Harshaw, C. and Middleton, J. and S{\"a}vje, F.},
  title   = {Optimized variance estimation under interference and complex experimental designs},
  journal = {J. Amer. Stat. Assoc.},
  year    = {2026}
}

@book{levin2017markov,
  title     = {Markov Chains and Mixing Times},
  author    = {Levin, D. and Peres, Y. and Wilmer, E.},
  edition   = {2},
  year      = {2017},
  publisher = {American Mathematical Society}
}

@misc{vanhandel2016prob,
  author       = {van Handel, R.},
  title        = {Probability in High Dimension},
  year         = {2016},
  howpublished          = {\url{https://web.math.princeton.edu/~rvan/APC550.pdf}}
}

@article{NeweyWest1987,
  author  = {Newey, W. and West, K.},
  title   = {A Simple, Positive Semi-Definite, Heteroskedasticity and Autocorrelation Consistent Covariance Matrix},
  journal = {Econometrica},
  year    = {1987},
  volume  = {55},
  number  = {3},
  pages   = {703--708}}

@article{Kunsch1989,
  author  = {K{\"u}nsch, H.},
  title   = {The Jackknife and the Bootstrap for General Stationary Observations},
  journal = {Ann. Statist.},
  year    = {1989},
  volume  = {17},
  number  = {3},
  pages   = {1217--1241}
}

@article{ImbensMenzel2021,
  author  = {Imbens, G. and Menzel, K.},
  title   = {A Causal Bootstrap},
  journal = {Ann. Statist.},
  year    = {2021},
  volume  = {49},
  number  = {3},
  pages   = {1460--1488}
}

@article{AronowGreenLee2014,
  author  = {Aronow, P. and Green, D. and Lee, D.},
  title   = {Sharp Bounds on the Variance in Randomized Experiments},
  journal = {Ann. Statist.},
  year    = {2014},
  volume  = {42},
  number  = {3},
  pages   = {850--871}
}

@article{AbadieAtheyImbensWooldridge2020,
  author  = {Abadie, A. and Athey, S. and Imbens, G. and Wooldridge, J.},
  title   = {Sampling-Based versus Design-Based Uncertainty in Regression Analysis},
  journal = {Econometrica},
  year    = {2020},
  volume  = {88},
  number  = {1},
  pages   = {265--296}
}

@misc{HuWager2022,
  author       = {Hu, Y. and Wager, S.},
  title        = {Switchback Experiments under Geometric Mixing},
  year         = {2022},
  eprint       = {2209.00197},
  archivePrefix= {arXiv},
  primaryClass = {stat.ME},
  note         = {Preprint}
}

@article{robins1988confidence,
  author  = {Robins, J.},
  title   = {Confidence Intervals for Causal Parameters},
  journal = {Stat. Med.},
  volume  = {7},
  number  = {7},
  pages   = {773--785},
  year    = {1988}
}

@article{nutz2022directional,
  author  = {Nutz, M. and Wang, R.},
  title   = {The Directional Optimal Transport},
  journal = {Ann. Appl. Probab.},
  volume  = {32},
  number  = {2},
  pages   = {1400--1420},
  year    = {2022}
}

@article{MukerjeeDasguptaRubin2018,
  author  = {Mukerjee, R. and Dasgupta, T. and Rubin, D.},
  title   = {Using Standard Tools from Finite Population Sampling to Improve Causal Inference for Complex Experiments},
  journal = {J. Amer. Stat. Assoc.},
  year    = {2018},
  volume  = {113},
  number  = {522},
  pages   = {868--881}
}

@misc{ChattopadhyayImbens2024,
  author        = {Chattopadhyay, A. and Imbens, G.},
  title         = {Neymanian Inference in Randomized Experiments},
  year          = {2024},
  eprint        = {2409.12498},
  archivePrefix = {arXiv},
  primaryClass  = {stat.ME},
  note          = {Preprint}
}

@article{lin2025timeseries,
  author        = {Lin, Z. and Ding, P.},
  title         = {Unifying regression-based and design-based causal inference in time-series experiments},
  year          = {2025},
  eprint        = {2510.22864},
  archivePrefix = {arXiv},
  primaryClass  = {stat.ME},
  note          = {Preprint}
}

@article{politis1997subsampling,
  title   = {Subsampling for heteroskedastic time series},
  author  = {Politis, D. and Romano, J. and Wolf, M.},
  journal = {J. Econometrics},
  volume  = {81},
  number  = {2},
  pages   = {281--317},
  year    = {1997}
}

@article{anari2024spectral,
  author  = {Anari, N. and Liu, K. and Oveis Gharan, S.},
  title   = {Spectral Independence in High-Dimensional Expanders and Applications to the Hardcore Model},
  journal = {SIAM J. Comput.},
  volume  = {53},
  number  = {6},
  year    = {2024}
}

@book{odonnell2014analysis,
  author    = {O'Donnell, R.},
  title     = {Analysis of Boolean Functions},
  publisher = {Cambridge University Press},
  year      = {2014}
}

\clearpage
\section{Appendix}\label{app}

\subsection{Proofs for Section \ref{Section_spectral}}
We begin with some definitions. Let $\H=L^2(\sigma(\W))$ denote the space of all square-integrable random variables measurable on $\W.$ Let $\Omega$ denote the support of $\W.$ Given an index-sampling rule $\mu$ to obtain $\W',$ let $P$ denote the induced transition matrix. Let $T:\H\to\H$ denote the linear operator
    \begin{align*}
        T(f(\W)) = \E[f(\W')\mid \W].
    \end{align*}
Finally, for any $w\in\Omega,$ define
\begin{align*}
    (Tf)(w) &= \E[f(\W')\mid W=w]\\
    &= \sum_{w'\in\Omega}P(w,w')f(w')\\
    &= (Pf)_w
\end{align*}
where the last expression views $f$ as a $|\Omega|$-dimensional vector and $Pf$ is matrix-vector multiplication. 
    
Since $T(f(\W)) = (Tf)(\W),$ we see that there is a one-to-one correspondence between eigenvalues of $T$ and eigenvalues of $P.$ When $\W$ follows the Bernoulli design or the completely randomized design, we have nice expressions for the orthogonal basis of $\H.$ For $S$ in the setting of Lemmas \ref{Spectral} and \ref{Spectral_CRT_1}, it turns out that each basis element is an eigenfunction for $T$, allowing explicit calculations of eigenvalues.

\begin{proof}[Proof of Lemma \ref{Spectral}]
    We work under the Bernoulli design. From \citep{odonnell2014analysis}, we know that the orthogonal basis of $\mathcal{H}$ is given by $\{\phi_A\}_{A\subseteq [m]}$ where 
    \begin{align*}
        \phi_A = \prod_{j\in A}\frac{W_j-\pi_j}{\sqrt{\pi_j(1-\pi_j)}}.
    \end{align*}
    We claim that each $\phi_A$ is an eigenfunction of $T.$ To check this, first note that
    \begin{align*}
        T(\phi_A) &= \E[\phi_A(\W')\mid \W]= \E[\E[\phi_A(\W')\mid S,\W]\mid \W].
    \end{align*}
    Next, since $\{W'_j\}_{j\in E}$ are conditionally independent given $(S=E,\W),$ we see that
    \begin{align*}
        \E[\phi_A(\W')\mid S,\W] &= \phi_{A\setminus S}(\W)\cdot \E[\phi_{A\cap S}(\W')\mid S,\W]
        =
        \begin{cases}
            \phi_A(\W) & A\cap S = \emptyset,\\
            0 & \text{otherwise}.
        \end{cases}
    \end{align*}
    This gives
    \begin{align*}
        T(\phi_A) = \E[\phi_A(\W)1_{A\cap S=\emptyset}\mid \W] = \P(A\cap S=\emptyset)\cdot \phi_A(\W),
    \end{align*}
    showing that $\phi_A$ is an eigenfunction with corresponding eigenvalue $\P(A\cap S = \emptyset).$ In particular, these are the eigenvalues of $P$ as well. The largest eigenvalue is of course when $A = \emptyset$ so that $\P(A\cap S = \emptyset) = 1.$ The second largest eigenvalue is the maximum over $A\neq\emptyset,$ and the spectral gap is 
    \begin{align*}
        1 - \max_{A\neq\emptyset}\P(A\cap S = \emptyset) = \min_{A\neq\emptyset} \P(A\cap S \neq\emptyset)
    \end{align*}
    as desired. The simplification to the minimum over $i\in [m]$ follows by monotonicity, and we conclude the proof.
\end{proof}

We now consider the spectral gap for Gibbs rerandomization under the completely randomized design. We first show that we can assume $n_1\leq m/2$ without loss of generality. 
To see this, write $\Z = \mathbf{1}-\W$ and let $\Omega$ denote the Boolean slice of order $n_1.$ Letting $\W',\Z'$ respectively denote the Gibbs rerandomizations of $\W,\Z$ with update set $S,$ we get
\begin{align*}
    \P(\W'=w'\mid \W=w) = \P(\Z'=\mathbf{1}-w'\mid \Z = \mathbf{1}-w)
\end{align*}
for all $w,w'\in\Omega.$ In particular, this shows that the transition matrices for $(\W,\W')$ and $(\Z,\Z')$ are similar and hence have the same spectral gap. Since $\Z$ is uniform on the boolean slice of order $m - n_1,$ we conclude the argument.

In the remaining parts, assume $n_1\leq m/2$ throughout. We begin with proving the following special case of Lemma \ref{Spectral_CRT_1}.
\begin{lemma}\label{Spectral_CRT_2}
 Let $\W$ follow the completely randomized design with $n_1$ treated units and $n_0$ control units. Let $\mu$ denote the index-sampling rule that returns a uniform subset of size $L\geq 1$, independent of $\W.$ Then, we have
    \begin{align*}
        \lambda = \frac{L-1}{m-1}.
    \end{align*}
\end{lemma}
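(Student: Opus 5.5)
The plan is to diagonalize the transition operator $T(f(\W))=\E[f(\W')\mid\W]$ on $\H=L^2(\sigma(\W))$ using the symmetry of the slice. Here $\W$ is uniform on the Boolean slice $\Omega=\{w:\sum_j w_j=n_1\}$ with $n_1\le m/2$, and $S$ is a uniform $L$-subset drawn independently of $\W$. Given $S$ and $\W_{-S}$, the number of treated units inside $S$ is determined, so Step \ref{step_two} redistributes the values $\W_S$ by a uniformly random permutation within $S$. Hence $T=\E_S[\,\E[\cdot\mid S,\W_{-S}]\,]$ is an average of orthogonal projections, and it commutes with the action of the symmetric group $\mathfrak S_m$ on coordinates, because the law of $S$ is permutation invariant.

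By the standard decomposition of functions on the slice (Filmus' basis \citep{filmus2016basis}), $\H=V_0\oplus V_1\oplus\cdots\oplus V_{n_1}$. Here $V_d$ is the space of degree-$d$ harmonic functions, and these are pairwise non-isomorphic irreducible $\mathfrak S_m$-modules (the Specht modules $S^{(m-d,d)}$). By Schur's lemma, $T$ acts on each $V_d$ as a scalar $\theta_d$, with $\theta_0=1$ on the constants. The spectral gap is therefore $1-\max_{d\ge1}\theta_d$, and the proof reduces to computing $\theta_1$ and showing $\theta_d\le\theta_1$ for all $d\ge1$.

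For $\theta_1$, I would take a linear function $f=\sum_i c_iW_i$ with $\sum_i c_i=0$, which spans $V_1$. Given $S$, $\E[W'_i\mid S,\W]$ equals $W_i$ for $i\notin S$ and equals $\bar W_S=L^{-1}\sum_{k\in S}W_k$ for $i\in S$. Averaging over $S$, with $\P(j\in S)=L/m$ and $\P(i,j\in S)=L(L-1)/(m(m-1))$, the coefficient of $W_j$ in $Tf$ is
\begin{align*}
c_j\Big(1-\tfrac{L}{m}\Big)+\tfrac1L\Big(\tfrac{L}{m}c_j+\tfrac{L(L-1)}{m(m-1)}\sum_{i\ne j}c_i\Big)
= c_j\Big(1-\tfrac{L-1}{m-1}\Big),
\end{align*}
using $\sum_{i\ne j}c_i=-c_j$. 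This gives $1-\theta_1=(L-1)/(m-1)$, which is the claimed value.

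The main obstacle is the comparison $\theta_d\le\theta_1$ for $d\ge2$. My first attempt is the variational formula $1-\theta_d=\E_S\|f-\E[f\mid S,\W_{-S}]\|^2/\|f\|^2$ evaluated on a convenient Filmus element $\chi_d=\prod_{k=1}^d(W_{a_k}-W_{b_k})$ with disjoint pairs. Conditioning on $S$ and on how many of the $2d$ special coordinates fall in $S$ should reduce $\E[\chi_d\mid S,\W_{-S}]$ to an explicit multiple of $\chi_d$ restricted to the pairs disjoint from $S$, plus terms orthogonal to $\chi_d$. This would give $\theta_d$ as a hypergeometric average that visibly decreases in $d$.

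If that bookkeeping becomes unwieldy, I would instead compare $T$ with the random-transposition chain. Both operators lie in the commutative Bose–Mesner algebra of the Johnson scheme, and a known closed form (Hahn/Eberlein polynomials) expresses $\theta_d$ as a ratio $\binom{m-2d}{L-d}$-type terms that are monotone in $d$. Either route yields $\lambda=1-\theta_1=(L-1)/(m-1)$. The degenerate case $L=1$, where the chain never moves and $\lambda=0$, is consistent with the formula.
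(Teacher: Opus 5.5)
\textbf{Gap: the inequality $\theta_d\le\theta_1$ for $d\ge 2$ is asserted but never proved, and it carries the lemma.}

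Your structural setup is sound. $T=\E_S\bigl[\E[\,\cdot\mid S,\W_{-S}]\bigr]$ is an $\mathfrak S_m$-equivariant average of projections. The slice decomposition is multiplicity-free, so Schur's lemma makes $T$ a scalar $\theta_d$ on each $V_d$; this replaces the paper's appeal to Lemma 18 of \citet{filmus2016basis}. Your direct computation $\theta_1=(m-L)/(m-1)$ on linear functions is correct, and simpler than the paper's route to the same number. But the argument stops where the substance of the lemma lies. Since $\lambda=1-\max_{d\ge 1}\theta_d$, what you have proved is only $\lambda\le (L-1)/(m-1)$. Equality requires $\theta_d\le\theta_1$ for every $2\le d\le n_1$. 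This is the direction that matters here: $\widehat V$ divides by $\lambda$, so an overstated gap would void the conservativeness guarantee of Theorem~\ref{Main}. Neither of your two routes is carried out. The claim that the $\chi_d$ computation produces a hypergeometric average that ``visibly decreases in $d$'' is unsupported, and the Hahn/Eberlein closed form is only gestured at, with no formula and no monotonicity check.

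To close the gap you need an explicit formula for $\theta_d$ and a genuine monotonicity argument. The paper evaluates $\theta_d=\E[\chi_d(\W')\mid\W=w_\star]$ at the point $w_\star$ with $d$ leading $(1,0)$ pairs, where $\chi_d(w_\star)=1$; this avoids any variational bookkeeping. If $S$ contains a whole pair $\{2i-1,2i\}$, the conditional expectation vanishes by the swap symmetry. Otherwise one conditions on the number $t$ of pairs that $S$ meets in exactly one coordinate. A Vandermonde-type summation then gives $\theta_d=\frac{L+1}{\binom{m}{L}}\cdot\frac{1}{m-2d+1}\sum_{k=d}^{m-d}\binom{k}{L}$. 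This is a fixed multiple of the mean of the convex sequence $k\mapsto\binom{k}{L}$ over the window $\{d,\dots,m-d\}$ centered at $m/2$. Shrinking that window cannot increase the mean, because $\binom{d}{L}+\binom{m-d}{L}\ge\binom{k}{L}+\binom{m-k}{L}$ for $d\le k\le m-d$; the paper proves this via the hockey-stick identity.

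If you prefer to stay in your representation-theoretic framework, Schur's lemma gives $\theta_d=\dim V_d^{\mathfrak S_L}/\dim V_d$, with $\mathfrak S_L$ the permutations of a fixed $L$-subset of coordinates. Pieri's rule then yields $\theta_d=\bigl(\binom{m-L}{d}-\binom{m-L}{d-L-1}\bigr)/\bigl(\binom{m}{d}-\binom{m}{d-1}\bigr)$, with $\binom{a}{b}=0$ for $b<0$. That is a legitimate alternative derivation of the eigenvalues. You would still owe a proof that this expression is nonincreasing in $d\le n_1$, which is not apparent from its form.
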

\begin{proof}
   We first analyze the transition matrix $P$ for $(\W,\W').$ Let $\Omega$ denote the boolean slice of order $n_1$ and fix any $w,w'\in\Omega$. Then, we see that 
    \begin{align*}
        \P(\W'=w' | \W=w, S) = \frac{1}{\binom{L}{\sum_{i\in S}w_i}}1_{w'_{-S}=w_{-S}}.
    \end{align*}
    Thus, we get
    \begin{align*}
        \P(\W'=w' | \W=w) &= \sum_{E:|E|=L}\P(\W'=w'\mid \W=w, S=E)\P(S=E)\\
        &= \frac{1}{\binom{m}{L}}\sum_{k=0}^L \frac{1}{\binom{L}{k}}\sum_{\substack{E: |E|=L,\\w(E)=k}}1_{w'_{-E}=w_{-E}}
    \end{align*}
    where $w(E)=\sum_{j\in E}w_j.$ Next, note that 
    \begin{align*}
       \sum_{\substack{E: |E|=L,\\w(E)=k}}1_{w'_{-E}=w_{-E}} &= \left|\{E: |E|=L, w(E)=k, w'_{-E}=w_{-E}\}\right|\\
        &= \left|\{F: |F| = m-L, w(F)=n_1-k, w'_F = w_F\}\right|.
    \end{align*}
    Let $t=\left|\{i\in[m]: w_i=w'_i=1\}\right|.$ Then, we must have $\left|\{i\in [m]: w_i = w'_i = 0\}\right| = n_0 - (n_1-t)$. Hence, we see that
    \begin{align*}
        \left|\{F: |F|=m-L, w(F)=n_1-k, w'_F = w_F\}\right| = \binom{t}{n_1-k}\binom{n_0-(n_1-t)}{n_0 - (L-k)}.
    \end{align*}
    In particular, we see that $\P(\W'=w'\mid \W=w)$ is a function of $t$ alone. 

    Next, we consider eigenfunctions of $T$. For each integer $0\leq d\leq n_1,$ define
    \begin{align*}
        \chi_d = \prod_{i=1}^d (W_{2i-1}-W_{2i}).
    \end{align*}
    Note that $P$ satisfies Definition 17 of \citep{filmus2016basis}. By Lemma 18 of \citep{filmus2016basis}, each $\chi_d$ is an eigenfunction of $T$, and their corresponding eigenvalues generate the entire set of eigenvalues of $P$ (without multiplicity). Fix $d$ and let $w_{\star}=(1,0,1,0,...,1,0,\mathbf{1},\mathbf{0})\in \Omega$ where there are $d$ many pairs of $(1,0)$ in the beginning. We wish to compute the eigenvalue $\lambda_d$ corresponding to $\chi_d.$ Since
    \begin{align*}
        \E\left[\chi_d(\W')\mid \W\right] = \lambda_d\cdot \chi_d(\W),
    \end{align*}
    we see that
    \begin{align*}
        \E\left[\chi_d(\W')\mid \W = w_\star\right] = \lambda_d\cdot \chi_d(w_\star) = \lambda_d.
    \end{align*}
    Thus, it remains to compute the conditional expectation given $\W = w_\star.$ Let $E$ denote any subset of size $L$ that includes $2i-1,2i$ for some $i\in [d].$ Let $\tau$ denote the permutation that swaps indices $2i-1,2i.$ Then, given $\W = w_{\star}, S=E,$ we see that $\W'\ed \tau\W'.$ Since $\chi_d(\tau\W') = -\chi_d(\W'),$ we conclude that
    \begin{align*}
        \E\left[\chi_d(\W')|\W=w_\star, S=E\right] = 0.
    \end{align*}

    For each $t\in [d],$ let $H_t$ denote the event such that $S$ includes exactly $t$ many $i\in [d]$ satisfying $2i-1\in S$ or $2i \in S$, and $\{2i-1,2i\}\not\subseteq S$ for all $i\in [d].$ Then, we see that
    \begin{align*}
        \E[\chi_d(\W')\mid \W=w_\star, H_t] &= \E\left[\prod_{i=1}^d (W'_{2i-1}-W'_{2i})\mid \W=w_\star, H_t\right]\\
        &=\P(\W'_{[2d]} = \W_{2d}\mid \W=w_\star, H_t).
    \end{align*}
    This is because if $\W'_{[2d]}$ differs from $\W_{[2d]}$ on $(2i-1,2i)$, then the corresponding term in the product must equal zero. Let $a_1$ denote the number of $i\in [d]$ such that $2i-1\in S.$ Let $b_1$ denote the number of $j\in S\setminus [2d]$ such that $w_j = 1.$ We see that
     \begin{align*}
         \P(\W'_{[2d]} = \W_{[2d]}\mid \W=w_\star, H_t, a_1,b_1) &= \frac{\binom{L-t}{b_1}}{\binom{L}{a_1+b_1}}.
     \end{align*}
     Averaging over $a_1,$ we get
     \begin{align*}
         \P(\W'_{[2d]} = \W_{[2d]}\mid \W=w_\star, H_t, b_1) &= \frac{1}{2^t}\sum_{k = 0}^t\frac{\binom{t}{k}\binom{L-t}{b_1}}{\binom{L}{k+b_1}}\\
         &= \frac{1}{2^t}\sum_{k=0}^t \frac{\binom{k+b_1}{b_1}\binom{L-k-b_1}{L-t-b_1}}{\binom{L}{t}}\\
         &= \frac{1}{2^t}\cdot\frac{\binom{L+1}{L-t+1}}{\binom{L}{t}}\\
         &= \frac{1}{2^t}\cdot \frac{L+1}{L-t+1}.
     \end{align*}
    Since the final expression does not depend on $b_1,$ we get
    \begin{align*}
         \P(\W'_{[2d]} = \W_{[2d]}\mid \W=w_\star, H_t) &= \frac{1}{2^t}\cdot \frac{L+1}{L-t+1}
     \end{align*}
     as well. Bringing everything together, 
     \begin{align*}
          \E[\chi_d(\W')\mid \W=w_\star] = \sum_{t=0}^d \frac{2^t\binom{d}{t}\binom{m-2d}{L-t}}{\binom{m}{L}} \cdot \frac{1}{2^t}\frac{L+1}{L-t+1}.
     \end{align*}
     Simplifying further, we get
     \begin{align*}
          \lambda_d = \E[\chi_d(\W')\mid \W=w_\star] &=  \frac{L+1}{m-2d+1}\sum_{t=0}^{d\land L} \frac{\binom{d}{t}\binom{m-2d+1}{L-t+1}}{\binom{m}{L}}\\
          &= \frac{L+1}{m-2d+1}\frac{1}{\binom{m}{L}}\left(\sum_{t=0}^{L+1} \binom{d}{t}\binom{m-2d+1}{L-t+1} - \binom{d}{L+1}\right)\\
          &= \frac{L+1}{m-2d+1}\cdot \frac{\binom{m-d+1}{L+1}-\binom{d}{L+1}}{\binom{m}{L}}.
     \end{align*}
     
    To conclude, we show that $\lambda_d$ is non-increasing in $d$ for $d\leq n_1.$ By the hockey-stick identity, we let
    \begin{align*}
        c_d=\frac{1}{m-2d+1}\left[\binom{m-d+1}{L+1}-\binom{d}{L+1}\right] &= \frac{1}{m-2d+1}\sum_{k=d}^{m-d}\binom{k}{L}.
    \end{align*}
    Next, we have 
    \begin{align*}
        c_{d} - c_{d+1} &= \frac{1}{(m-2d+1)(m-2d-1)}\left[(m-2d-1)\sum_{k=d}^{m-d}\binom{k}{L} - (m-2d+1)\sum_{k=d+1}^{m-d-1}\binom{k}{L}\right]\\
        &= \frac{1}{(m-2d+1)(m-2d-1)}\left[(m-2d-1)\left(\binom{d}{L}+\binom{m-d}{L}\right)- 2\sum_{k=d+1}^{m-d-1}\binom{k}{L}\right].
    \end{align*}
    Thus, we have $c_d\geq c_{d+1}$ if 
    \begin{align*}
        \binom{d}{L} + \binom{m-d}{L}\geq \binom{k}{L}+\binom{m-k}{L}
    \end{align*}
    for all $k>d.$ This follows by applying the hockey-stick identity to both sides of
    \begin{align*}
        \binom{m-d}{L} - \binom{m-k}{L} \geq \binom{k}{L} - \binom{d}{L},
    \end{align*}
    noting that $m-d \geq k\geq d.$ Hence, we see that the spectral gap is given by 
    \begin{align*}
        \lambda_0 - \lambda_1 = 1 - \frac{m-L}{m-1} = \frac{L-1}{m-1}.
    \end{align*}
\end{proof}

\begin{proof}[Proof of Lemma \ref{Spectral_CRT_1}]
    Let $P$ denote the transition matrix from $\W$ to $\W'.$ Let $P_L$ denote the transition matrix in the setting of Lemma \ref{Spectral_CRT_2}. Noting
    $\P(\W'=w'\mid \W=w, |S|=L) = P_L(w,w'),$ we see that
    \begin{align*}
        P = \sum_{L=0}^m \P(|S|=L)P_L
    \end{align*}
    where $P_0 = I.$ Since $P_L$ for all $0\leq L\leq m$ satisfies Definition 17 of \citep{filmus2016basis}, we see that $P$ also satisfies Definition 17. Hence, Lemma 18 is applicable, and again it suffices to consider $\chi_d$ for $0\leq d\leq n_1$ as in the proof of Lemma \ref{Spectral_CRT_2}. In particular, we see that
    \begin{align*}
        P\chi_d = \left(\P(|S|=0) + \sum_{L=1}^m \P(|S|=L)\lambda_{L,d}\right)\chi_d
    \end{align*}
    where $\lambda_{L,d}$ is the eigenvalue of $P_L$ corresponding to $\chi_d.$ Since $\lambda_{L,d}$ is non-increasing in $d$ for each $L,$ we see that the spectral gap of $P$ is given by 
    \begin{align*}
        \sum_{L=1}^m \P(|S|=L)\cdot \frac{L-1}{m-1} = \frac{\E|S|-1 + \P(S=\emptyset)}{m-1}
    \end{align*}
\end{proof}
\begin{proof}[Proof of Corollary \ref{CRT_pair}]
    Let $S_2$ be a uniformly random subset chosen according to the index-sampling rule in Lemma \ref{Spectral_CRT_2} for $L=2.$ Let $\Z$ denote the corresponding one-step transition. Let $P_2$ denote the induced transition matrix. Finally, let $E$ denote the event that $S_2\subseteq T$ or $S_2\subseteq C.$ Then, we get
    \begin{align*}
        \P(\Z=w'\mid \W=w, E) &= 1_{w'=w},\\
        \P(\Z=w'\mid \W=w, E') &= P(w,w').
    \end{align*}
    Thus, we see that
    \begin{align*}
        P_2 = \frac{\binom{m}{2} - n_1n_0}{\binom{m}{2}}I + \frac{n_1n_0}{\binom{m}{2}}P
    \end{align*}
    and also 
    \begin{align*}
        P = \frac{\binom{m}{2}}{n_1n_0}\left(P_2 - \frac{\binom{m}{2} - n_1n_0}{\binom{m}{2}}I\right).
    \end{align*}
    Again, we see that $P$ satisfies Definition 17 of \citep{filmus2016basis}. Hence, the same machinery applies and we see that the spectral gap of $P$ is given by $\binom{m}{2}/(n_1n_0)$ times the spectral gap of $P_2,$ giving
    \begin{align*}
        \lambda = \frac{\binom{m}{2}}{n_1n_0}\cdot \frac{1}{m-1} = \frac{m}{2n_1n_0}.
    \end{align*}
\end{proof}

\subsection{Proofs for Section \ref{Classical}}
\begin{proof}[Proof of Proposition \ref{Neyman_connection}]
    Let $\overline{Y}_T=n_1^{-1}\sum_{i\in T} Y_i$ and $\overline{Y}_C= n_0^{-1}\sum_{i\in C}Y_i.$ Then, we see that
\begin{align*}
    \hat\tau - f^{-\{j,k\}} = \frac{1}{n_1-1}(Y_j - \overline{Y}_T) - \frac{1}{n_0-1}(Y_k - \overline{Y}_C),
\end{align*}
resulting in
\begin{align*}
    \widehat V = \frac{2}{n}\sum_{j\in T}\sum_{k\in C}\left(\frac{1}{n_1-1}(Y_j - \overline{Y}_T) - \frac{1}{n_0-1}(Y_k - \overline{Y}_C)\right)^2.
\end{align*}
After expanding the square, the cross term summed over all $(j,k)$ becomes zero. Thus, we conclude that
\begin{align*}
    \widehat V &= \frac{2}{n}\left(\frac{n_0}{(n_1-1)^2}\sum_{j\in T}(Y_j - \overline{Y}_T)^2+\frac{n_1}{(n_0-1)^2}\sum_{k\in C}(Y_k - \overline{Y}_C)^2\right)
\end{align*}
as desired.
\end{proof}
\begin{proof}[Proof of Proposition \ref{NW_connection}]
Let $\mathbf{x}\in\R^n$ be the vector with $i$th entry $x_i = \psi_i - \overline{\psi}_n$ and $B$ be the circular Bartlett matrix with lag $L+2M-1$. Indeed, $\overline{\psi}_n$ is the sample mean of the $\psi_i$'s. Then, the sample-centered Newey--West estimator \citep{NeweyWest1987} with lag $L+2M-1$ and circular Bartlett weights is given by 
\begin{align*}
    \widehat V_{NW} = \frac{1}{n^2}\cdot \mathbf{x}'B\mathbf{x}.
\end{align*}
It remains to show that
\begin{align}
    &\widehat V 
    = \frac{L+2M}{L}\cdot \frac{1}{(n-L-2M)^2}\cdot \mathbf{x}'B \mathbf{x}.
\end{align}
Letting $D_i$ denote the padded version of $S_i,$ a standard calculation gives
\begin{align*}
    \hat\tau - f^{-S_i} 
    &= \overline{\psi}_n - \frac{1}{n-|D_i|}\sum_{j\notin D_i}\psi_j\\
    &= \frac{1}{n-|D_i|}\sum_{j\in D_i}(\psi_j - \overline{\psi}_n).
\end{align*}
Let $\mathbf{v_i}$ denote the vector with $j$th entry $v_{ij}=1_{j\in D_i}.$ Then, we see that
\begin{align*}
    \widehat V &= \frac{1}{L(n-L-2M)^2}\cdot \mathbf{x}'\left(\sum_{i=1}^n \mathbf{v_i} \mathbf{v_i}'\right) \mathbf{x}.
\end{align*}
Noting $\left(\mathbf{v_i}\mathbf{v_i}'\right)_{jk} = 1_{j,k\in D_i}$, we get
\begin{align*}
\left(\sum_{i=1}^n \mathbf{v_i} \mathbf{v_i}'\right)_{jk} = \sum_{i=1}^n 1_{j,k\in D_i} = 
\begin{cases}
    L + 2M - d(j,k) & d(j,k) \leq L + 2M-1,\\
    0 & d(j,k) > L + 2M -1
\end{cases}    
\end{align*}
for each $j,k\in [n]$ where $d(j,k)$ is the graph-distance on the cycle graph. These entries are exactly $L+2M$ times the entries of the circular Bartlett matrix with lag $L+2M-1.$ Thus, we get 
$$\sum_{i=1}^n\mathbf{v_i}\mathbf{v_i}' = (L+2M)B$$
and conclude the proof.
\end{proof}

\subsection{Proofs for Section \ref{tradeoff}}
\begin{proof}[Proof of Lemma \ref{oracle_exp}]
    We prove the equality for $\UB.$ For any $E\subseteq [m],$ we have
    \begin{align}
        \E[\hat\tau \mid S=E, \W_{-S}] &= \E[\hat\tau\mid \W_{-E}]= \sum_{A\subseteq [m]}c_A\cdot \E[\phi_A\mid \W_{-E}].\label{eq:First_step}
    \end{align}
    Further note that
    \begin{align*}
        \E[\phi_A\mid \W_{-E}] &= \phi_{A\setminus E}\cdot \E[\phi_{A\cap E}\mid \W_{-E}]= \phi_{A\setminus E} \cdot \E[\phi_{A\cap E}] = \phi_A 1_{A\cap E = \emptyset}.
    \end{align*}
    Thus, we can further simplify \eqref{eq:First_step} to get
    \begin{align*}
        \E[\hat\tau \mid S=E, \W_{-S}] = \sum_{A: A\cap E = \emptyset} c_A \phi_A
    \end{align*}
    and also
    \begin{align*}
        \E[\hat\tau\mid S,\W_{-S}] &= \sum_{E\subseteq [m]}1_{S=E}\cdot \E[\hat\tau\mid S = E, \W_{-S}]\\
        &= \sum_{E\subseteq [m]}1_{S=E}\sum_{A: A\cap E=\emptyset}c_A\phi_A\\
        &=\sum_{A\subseteq [m]} 1_{S\cap A = \emptyset}\cdot c_A\phi_A.
    \end{align*}
    To conclude, note that
    \begin{align*}
        \hat\tau - \E[\hat\tau\mid S,\W_{-S}] = \sum_{A\subseteq [m]}1_{S\cap A\neq \emptyset}\cdot c_A \phi_A.
    \end{align*}
    Since $S\indep \W,$ we have
    \begin{align*}
        \E[\left( \hat\tau - \E[\hat\tau\mid S,\W_{-S}] \right)^2] &= \sum_{A\subseteq [m]}\E[1_{S\cap A\neq \emptyset}\cdot c_A^2\phi_A^2]\\
        &= \sum_{A\neq \emptyset}\P(S\cap A\neq\emptyset)\cdot c_A^2.
    \end{align*}
    Dividing by $\lambda$ to get $\UB,$ we conclude the proof.
\end{proof}

\begin{proof}[Proof of Lemma \ref{Mono}]
    Fix any $A\neq \emptyset.$ By Lemma \ref{oracle_exp}, it suffices to show that the ratio 
    \begin{align*}
        \frac{\P(S\cap A\neq \emptyset)}{\lambda}
    \end{align*}
    is non-increasing in $L.$ First, by Lemma \ref{Spectral}, we see that $\lambda = L/m.$ Next, let $B_1,\dots, B_\ell$ denote the disjoint blocks that form $[m]\setminus A$ on the cycle. Then, we get
    \begin{align*}
        \P(S\cap A= \emptyset) &= \sum_{k=1}^\ell \P(S\subseteq B_k)\\
        &= \sum_{k=1}^\ell 1_{|B_k|\geq L}\cdot  \frac{|B_k|-L+1}{m}
    \end{align*}
    and thus
    \begin{align*}
        \P(S\cap A\neq \emptyset) &= 1 -\sum_{k=1}^\ell  \frac{(|B_k|-L+1)_+}{m}\\
        &= \frac{m - \sum_{k=1}^\ell (|B_k|-L+1)_+}{m}.
    \end{align*}
    Bringing everything together, we get
    \begin{align*}
        \frac{ \P(S\cap A\neq \emptyset)}{\lambda} = \frac{m - \sum_{k=1}^\ell (|B_k|-L+1)_+}{L}=: \frac{r(L)}{L}.
    \end{align*}
    Our key identity is 
    \begin{align*}
        r(L+1)-r(L) &= \sum_{k=1}^\ell \big[(|B_k|-L+1)_+ - (|B_k|-L)_+\big]\\
        &= |\{k: |B_k|\geq L\}|.
    \end{align*}
    To conclude, it remains to show 
    \begin{align*}
        \frac{r(L)}{L}\geq \frac{r(L+1)}{L+1} &\Leftrightarrow L\cdot r(L) + r(L)\geq L \cdot r(L+1)\\
        &\Leftrightarrow r(L)\geq L\cdot |\{k: |B_k|\geq L\}|.
    \end{align*}
    Note that 
    \begin{align*}
        r(L) &= m - \sum_{k=1}^\ell (|B_k|-L+1)_+\\
        &= m - \sum_{k: |B_k|\geq L}(|B_k|-L+1)\\
        &= m - \sum_{k:|B_k|\geq L}(|B_k| + 1) +\sum_{k: |B_k|\geq L} L.
    \end{align*}
    Since the blocks $B_k$ have at least one unit in between, we know that
    \begin{align*}
        \sum_{k=1}^\ell (|B_k| + 1) \leq m.
    \end{align*}
    This gives
    \begin{align*}
        m - \sum_{k:|B_k|\geq L}(|B_k| + 1) \geq 0
    \end{align*}
    and thus 
    $$r(L)\geq L\cdot |\{k: |B_k|\geq L\}|$$
    as desired. This concludes the proof.
\end{proof}

\begin{proof}[Proof of Proposition \ref{long}]
    For any $i\notin D,$ we see that $T_i,Y_i$ are both measurable on $\W_{-A}.$ This gives
\begin{align*}
    \E\left[\hat\tau\mid S=A,\W_{-A}\right] &= \frac{1}{n}\sum_{i\notin D}\psi_i + \frac{1}{n}\sum_{i\in D}\E\left[\left(\frac{T_i}{p_i}-\frac{1-T_i}{1-p_i}\right)Y_i\mid S=A,\W_{-A}\right].
\end{align*}
Next, the law of total expectation gives
\begin{align*}
    \E\left[\left(\frac{T_i}{p_i}-\frac{1-T_i}{1-p_i}\right)Y_i\mid S=A,\W_{-A}\right] 
    &= \frac{\P(T_i=1\mid S=A,\W_{-A})}{p_i}\cdot\E[Y_i\mid S=A,\W_{-A}, T_i=1] \\
    &- \frac{\P(T_i=0\mid S=A,\W_{-A})}{1-p_i}\cdot\E[Y_i\mid S=A,\W_{-A},T_i=0],
\end{align*}
concluding the proof.
\end{proof}

\subsection{Proofs for Section \ref{robustness}}
\begin{proof}[Proof of Lemma \ref{robust_identity}]
    We wish to show that 
    \begin{align*}
        \E\left[(g(S,\W) - g(S,\W'))^2\right] = 2\E\left[(g(S,\W) - h(S,\W_{-S}))^2\right]
    \end{align*}
    where $h(S,\W_{-S})=\E[g(S,\W)\mid S,\W_{-S}].$ The argument is similar to the proof of Lemma \ref{Poincare_mod}. Indeed, since $\W_S$ and $\W'_S$ are i.i.d. given $(S,\W_{-S}),$ we know that 
    \begin{align}
        \E\left[(g(S,\W) - h(S,\W_{-S}))\cdot(g(S,\W') - h(S,\W_{-S}))\right]=0
    \end{align}
    as the conditional expectation given $(S,\W_{-S})$ is zero. Thus, we get
    \begin{align*}
        \E\left[(g(S,\W) - g(S,\W'))^2\right] &= \E\left[(g(S,\W) - h(S,\W_{-S}))^2\right] + \E\left[(g(S,\W') - h(S,\W_{-S}))^2\right]\\
        &= 2\E\left[(g(S,\W) - h(S,\W_{-S}))^2\right]
    \end{align*}
    where the last step follows by Lemma \ref{ed}.
\end{proof}

\begin{proof}[Proof of Proposition \ref{time}]
    For each $j\in [k],$ let $R_j\subseteq [T]$ denote the set of all units strictly to the right of $D_j.$ Then, given $S=B_j,$ we see that
    \begin{align*}
        f^{-B_j}(\W) - f^{-B_j}(\W') = \frac{1}{T-|D_j|}\sum_{t\in R_j}c_t\cdot (Y_t(\W)-Y_t(\W'))
    \end{align*}
    for all $j\in [k]$ where $c_t=W_t/\pi - (1-W_t)/(1-\pi).$ By Cauchy-Schwarz, we further see that
    \begin{align*}
        \left(f^{-B_j}(\W) - f^{-B_j}(\W')\right)^2 &\leq \frac{1}{T-|D_j|}\sum_{t\in R_j}c_t^2\cdot (Y_t(\W)-Y_t(\W'))^2\\
        &\leq \frac{C}{T-\ell-r}\sum_{t\in R_j}(Y_t(\W)-Y_t(\W'))^2
    \end{align*}
    for some constant $C>0.$ Note that the leftmost unit in $R_j$ is at least $r+1$ steps away from $B_j.$ Hence, given $S=B_j,$ we have
    \begin{align*}
        \left(f^{-B_j}(\W) - f^{-B_j}(\W')\right)^2
        &\leq \frac{C}{T-\ell-r} \sum_{d\geq r+1}\Delta_{T,d}^2.
    \end{align*}
    Taking expectations, we conclude 
    \begin{align*}
        \E\left[\left(f^{-S}(\W) - f^{-S}(\W')\right)^2\right]
        &\leq \frac{C}{T-\ell-r} \sum_{d\geq r+1}\Delta_{T,d}^2.
    \end{align*}
    Using the given asymptotic conditions and applying Lemma \ref{robust} along with Lemma \ref{robust_identity}, we conclude the proof.
\end{proof}

\end{document}